\documentclass[11pt,dvipsnames]{article}

\usepackage[pagebackref=true]{hyperref}
\usepackage{amsfonts,amssymb,amsmath}
\usepackage{bbm}
\usepackage{booktabs}
\usepackage[labelfont=bf]{caption}
\usepackage{color}
\usepackage{doi}
\usepackage{enumitem}
\usepackage{float}
\usepackage{graphicx}
\usepackage[utf8]{inputenc}
\usepackage{mathtools}
\usepackage{nicefrac}       
\usepackage{subcaption}
\usepackage{xspace}
\usepackage{url}
\usepackage{breakcites}
\usepackage{appendix}
\usepackage{algorithm}
\usepackage{algpseudocode}
\usepackage{algorithmicx}

\usepackage{mathrsfs}
\usepackage{soul}
\usepackage{mleftright}
\usepackage[most]{tcolorbox}




\usepackage{crossreftools}
\pdfstringdefDisableCommands{%
    \let\Cref\crtCref
    \let\cref\crtcref
}
%

\usepackage{contour}
\usepackage[normalem]{ulem}

\contourlength{0.8pt}


%





\renewcommand{\Pr}{\operatorname{Pr}}
\newcommand{\defeq}[0]{\ensuremath{\;{\vcentcolon=}\;}\xspace}

\newcommand{\bv}[1]{\mathbf{#1}}

\let\norm\relax

\DeclarePairedDelimiter\norm{\lVert}{\rVert}

\DeclareMathOperator{\poly}{poly}

\DeclareMathOperator*{\Var}{Var}

\DeclareMathOperator{\diag}{diag}

\DeclareMathOperator{\rank}{rank}

\DeclareMathOperator{\range}{range}



\newcommand{\eps}[0]{\ensuremath{\varepsilon}}
\let\epsilon\eps





\newcommand{\mat}[1]{\mathbf{#1}} 
\renewcommand{\vec}[1]{\boldsymbol{\mathrm{#1}}} 
\newcommand{\vecalt}[1]{\boldsymbol{#1}} 
\newcommand{\variableth}[1]{$#1$th}
\newcommand{\rev}[1]{#1}

\newcommand{\bmat}[1]{\begin{bmatrix} #1 \end{bmatrix}} 


\renewcommand*{\backref}[1]{}
\renewcommand*{\backrefalt}[4]{%
  \ifcase #1 %
  (No citations.)
  \or
  (Cited on page #2.)
  \else
  (Cited on pages #2.)
  \fi
}


\DeclareMathOperator{\orth}{orth}



\newcommand{\order}{\mathcal{O}}
\newcommand{\orderish}{\widetilde{\mathcal{O}}}



\renewcommand{\hat}[1]{\widehat{#1}}
\renewcommand{\tilde}[1]{\widetilde{#1}}


\makeatletter
\newcommand*{\tp}{%
  {\mathpalette\@tp{}}%
}
\newcommand*{\@tp}[2]{%
  \raisebox{\depth}{$\m@th#1\intercal$}%
}
\makeatother


\newcommand{\mA}{\ensuremath{\mat{A}}\xspace}
\newcommand{\mB}{\ensuremath{\mat{B}}\xspace}
\newcommand{\mC}{\ensuremath{\mat{C}}\xspace}
\newcommand{\mD}{\ensuremath{\mat{D}}\xspace}

\newcommand{\mG}{\ensuremath{\mat{G}}\xspace}
\newcommand{\mH}{\ensuremath{\mat{H}}\xspace}
\newcommand{\mI}{\ensuremath{\mat{I}}\xspace}

\newcommand{\mK}{\ensuremath{\mat{K}}\xspace}

\newcommand{\mM}{\ensuremath{\mat{M}}\xspace}

\newcommand{\mQ}{\ensuremath{\mat{Q}}\xspace}

\newcommand{\mU}{\ensuremath{\mat{U}}\xspace}
\newcommand{\mV}{\ensuremath{\mat{V}}\xspace}
\newcommand{\mW}{\ensuremath{\mat{W}}\xspace}

\newcommand{\mZ}{\ensuremath{\mat{Z}}\xspace}

\newcommand{\mGamma}{\ensuremath{\mat{\Gamma}}\xspace}


\newcommand{\vc}{\ensuremath{\vec{c}}\xspace}

\newcommand{\vg}{\ensuremath{\vec{g}}\xspace}

\newcommand{\vq}{\ensuremath{\vec{q}}\xspace}

\newcommand{\vu}{\ensuremath{\vec{u}}\xspace}

\newcommand{\vx}{\ensuremath{\vec{x}}\xspace}
\newcommand{\vy}{\ensuremath{\vec{y}}\xspace}
\newcommand{\vz}{\ensuremath{\vec{z}}\xspace}

\newcommand{\valpha}{\ensuremath{\vecalt{\alpha}}\xspace}
\newcommand{\vbeta}{\ensuremath{\vecalt{\beta}}\xspace}


\newcommand{\cK}{\ensuremath{{\mathcal K}}\xspace}

\newcommand{\cN}{\ensuremath{{\mathcal N}}\xspace}

\newcommand{\cU}{\ensuremath{{\mathcal U}}\xspace}
\newcommand{\cV}{\ensuremath{{\mathcal V}}\xspace}


\newcommand{\bbR}{\ensuremath{{\mathbb R}}\xspace}


\newcommand{\llbracket}{[\![}
\newcommand{\rrbracket}{]\!]}

\newcommand{\fro}{\mathsf{F}}

\newcommand{\vvvert}[1]{#1|\!#1|\!#1|}
\newcommand{\nnnorm}[2][]{\vvvert{#1} #2 \vvvert{#1}}

\newcommand{\gap}{\mathrm{gap}}
\newcommand{\mingap}[1]{\Delta_{#1}}
\newcommand{\cond}[1]{\kappa_{#1}}

\newcommand{\const}{\mathrm{C}}
\newcommand{\doubleblindreview}[1]{\textbf{[redacted for double-blind peer review]}}
\usepackage[margin=1in,footskip=0.25in]{geometry}
\usepackage{amsthm}
\usepackage{thmtools}

\definecolor{c0}{HTML}{ffb000}
\definecolor{c1}{HTML}{fe6100}
\definecolor{c2}{HTML}{dc267f}
\definecolor{c3}{HTML}{785ef0}
\definecolor{c4}{HTML}{648fff}
\usepackage[capitalize, nameinlink, noabbrev]{cleveref}

\hypersetup{
    colorlinks=true,      
    linkcolor=c2,       
    citecolor=c4,      
    filecolor=c4,    
    urlcolor=c4         
}
\newcommand*{\email}[1]{\href{mailto:#1}{\nolinkurl{#1}} } 

\declaretheorem[name=Theorem]{theorem}
\numberwithin{theorem}{section}

\declaretheorem[name=Proposition,numberlike=theorem]{proposition}

\declaretheorem[name=Observation,numberlike=theorem]{observation}

\declaretheorem[name=Lemma,numberlike=theorem]{lemma}
\declaretheorem[name=Corollary,numberlike=theorem]{corollary}
\declaretheorem[name=Imported Theorem,numberlike=theorem]{importedtheorem}
\declaretheorem[name=Imported Lemma,numberlike=theorem]{importedlemma}

\theoremstyle{remark}
\declaretheorem[name=Remark,qed={\lower-0.3ex\hbox{$\diamond$}},numberlike=theorem]{remark}

\theoremstyle{definition}
\declaretheorem[name=Definition,numberlike=theorem]{definition}
\usepackage{mdframed}
\newmdtheoremenv{question}{Question}
\numberwithin{equation}{section}
\declaretheorem[name=Problem,numberlike=theorem]{problem}

\usepackage{titlesec}

\titleformat{\subsection}[runin]
  {\normalfont\normalsize\scshape\bfseries}{\thesubsection}{0.5em}{}[\textbf{.}]

\titleformat{\subsubsection}[runin]
  {\normalfont\normalsize\scshape}{\textit{\thesubsubsection}}{0.5em}{}[.]

\usepackage{charter}
\DeclareMathAlphabet{\mathcal}{OMS}{zplm}{m}{n}

\makeatletter
\def\hlinewd#1{%
	\noalign{\ifnum0=`}\fi\hrule \@height #1 \futurelet
	\reserved@a\@xhline}
\makeatother

\crefname{equation}{}{}
\crefname{section}{Section}{Sections}
\crefname{appendix}{Appendix}{Appendices}
\crefname{property}{property}{properties}
\makeatletter

\makeatother

\makeatletter
\def\th@plain{%
  \thm@notefont{}
  \itshape 
}
\def\th@definition{%
  \thm@notefont{}
  \normalfont 
}
\makeatother

\newcommand{\repThmName}{REP_THM_NAME}
\declaretheoremstyle[bodyfont=\normalfont\itshape,notefont=\normalfont\bfseries, notebraces={\repThmName}{Restated}]{repthmstyle}
\declaretheorem[style=repthmstyle, name=\hspace{-0.25em}, numbered=no]{reptheoremAbstract}

\newenvironment{reptheorem}[1]{%
    \renewcommand{\repThmName}{\Cref{#1}}
    \begin{reptheoremAbstract}[~]
} {%
    \end{reptheoremAbstract}
    \renewcommand{\repThmName}{REP_THM_NAME}
}%

\title{Does block size matter in randomized block\\ Krylov low-rank approximation?}

\author{\normalsize 
Tyler Chen\thanks{New York University (\email{tyler.chen@nyu.edu}, \email{cmusco@nyu.edu})} 
~~~~Ethan N.\ Epperly\thanks{California Institute of Technology (\email{eepperly@berkeley.edu}, \email{ram900@berkeley.edu})} 
~~~~Raphael A.\ Meyer\footnotemark[2] 
~~~~Christopher Musco\footnotemark[1] 
~~~~Akash Rao\thanks{Washington State University (\email{akashgopinath.rao@wsu.edu})}
}
\date{\today}

\begin{document}

\maketitle

\begin{abstract}
We study the problem of computing a rank‑$k$ approximation of a matrix using randomized block Krylov iteration.
Prior work has shown that, for block size $b = 1$ or $b = k$, a $(1 + \varepsilon)$-factor approximation to the best rank-$k$ approximation can be obtained after $\tilde\order(k/\sqrt{\varepsilon})$ matrix-vector products with the target matrix. On the other hand, when $b$ is between $1$ and $k$, the best known bound on the number of matrix-vector products scales with $b(k-b)$, which could be as large as $\order(k^2)$. 
Nevertheless, in practice, the performance of block Krylov methods is often optimized by choosing a block size $1 \ll b \ll k$.
We \rev{address} this theory-practice gap by proving that randomized block Krylov iteration produces a $(1 + \varepsilon)$-factor approximate rank-$k$ approximation using $\tilde\order(k/\sqrt{\varepsilon})$ matrix-vector products for \emph{any} block size $1\le b\le k$. 
Our analysis relies on new bounds for the minimum singular value of a random block Krylov matrix, which may be of independent interest. Similar bounds are central to recent breakthroughs on faster algorithms for sparse linear systems [Peng \& Vempala, SODA 2021; Nie, STOC 2022].
\end{abstract}

\section{Introduction}

Krylov subspace methods are among the most widely studied algorithms in numerical analysis and scientific computing \cite{LS13}. They have been recognized—alongside the simplex method and fast Fourier transform—as one of ten algorithms with “greatest influence on the development and practice of science and engineering in the 20th century” \cite{D+2000}.
Nevertheless, basic questions remain even for these fundamental methods.
This paper is concerned with one such question.

In particular, we focus on the use of Krylov methods for the \emph{low-rank approximation problem:}
\begin{problem}[Low-rank approximation] \label{problem:LRA}
Given $\mA\in\bbR^{n\times d}$ and an accuracy target $\varepsilon>0$, find a rank-$k$ matrix $\hat\mA$ such that:
\[
\nnnorm{ \mA - \hat\mA }
\leq (1+\varepsilon) \nnnorm{ \mA - \llbracket \mA \rrbracket_k } \quad\text{and}\quad
| \norm{\mA \vq_i}_2^2 - \sigma_i^2 | \leq \varepsilon \sigma_{k\rev{+1}}^2  ~~\text{for all } i=1,\ldots,k. 
\]
Here, $\nnnorm{\cdot}$ is either the Frobenius or spectral norm, $\llbracket \mA \rrbracket_k$ is an optimal rank-$k$ approximation to $\mA$, $\vq_i$ is the \variableth{i} right singular vector of $\hat\mA$, and $\sigma_i$ is the \variableth{i} largest singular value of $\mA$.
\end{problem}
\noindent \Cref{problem:LRA} asks for an approximation $\hat\mA$ with (a) approximation error competitive with the best possible rank-$k$ approximation $\llbracket \mA \rrbracket_k$ and (b) with singular vectors that capture the ``high-energy'' directions in \mA.  
Algorithms for obtaining these guarantees have been widely studied in both theoretical computer science \cite{Sar06,CW09,Woo14,DM16,CW17} and computational mathematics \cite{RST10,HMT11,GM13,Gu15,DI19,TW23}.
It is common to consider just condition (a).
\rev{The techniques used in this paper immediately lead to both guarantees.}

\emph{Randomized block Krylov iteration} (RBKI) is widely considered one of the most effective algorithms for low-rank approximation \cite{MM15,TW23}. It offers the best known theoretical guarantees for \Cref{problem:LRA} \cite{BCW22a,MMM24}, nearly matches lower bounds in natural models of computation \cite{SER18,BN23a}, and is successful in practice \cite{M+23,TW23}.
In order to approximate a general matrix \(\mA\in\bbR^{n \times d}\), we first generate a starting block matrix \(\mG\in\bbR^{n \times b}\) with \emph{block size} \(b\).
Often, \mG is chosen to be a standard Gaussian matrix.
Then, we construct a matrix \(\mZ\in\bbR^{d \times qb}\) with orthonormal columns whose span is the \emph{block Krylov subspace}
\begin{align}
\label{eq:krylov_def}
\cK_{q}(\mM,\mG) \defeq \operatorname{range}\left(
 \begin{bmatrix}
    \mG
    &
    \mM\mG
    &
    \mM^2\mG
    &\cdots 
    &
    \mM^{q-1}\mG
 \end{bmatrix}\right)
 ,\quad \text{ where } \mM \defeq \mA\mA^\tp.
\end{align}
We refer to the parameter \(q\) as the \emph{iteration count} of the Krylov subspace.
Finally, we return \(\mZ\llbracket\mZ^\tp\mA\rrbracket_k\), which is the best Frobenius norm rank-$k$ approximation to \mA in the Krylov subspace.
This meta-procedure is described in \Cref{alg:block_krylov_LRA}.
Practical implementations differ in exactly how the orthonormalization step in line \ref{line:build_Kq} is performed and how $\hat\mA$ is computed (e.g., it is usually desirable to return the matrix in factored form). We refer the reader to \cite{S97,TW23} for more details on implementation. In general, however, the cost of \Cref{alg:block_krylov_LRA} is dominated by the cost of constructing an orthonormal basis for \(\cK_q(\mA\mA^\tp,\mG)\).
Doing so, for instance via the block Lanczos method \cite{golub1977block,golub2013matrix}, expends at least
\[
 \order\left(bq \cdot \operatorname{mv}(\bv{A}) + bq^2n \right)
 ~\text{arithmetic operations,}
\]
where $\operatorname{mv}(\mA)$ is the cost of performing a matrix-vector product with $\mA$ and $\mA^\tp$.
For dense $\mA$, $\operatorname{mv}(\mA) = \order(nd)$, and the asymptotic cost is typically dominated by the first term. 

\begin{algorithm}[t]
\caption{Randomized block Krylov iteration (RBKI) for low-rank approximation}\label{alg:block_krylov_LRA}
\begin{algorithmic}[1]
\Require Target matrix $\mA\in\bbR^{n\times d}$, rank parameter $k$, block size $b$, iteration count $q$
\State Sample $n\times b$ Gaussian matrix $\mG$
\State Build $\mZ = \orth(\cK_{q}(\mA\mA^\tp,\mG))$ 
\label{line:build_Kq}
\State Compute $\llbracket \mZ^\tp\mA \rrbracket_k$ 
\Ensure $\hat\mA \defeq \mZ \llbracket\mZ^\tp \mA \rrbracket_k$
\end{algorithmic}
\end{algorithm}

\subsection{The intermediate block size mystery}
\label{sec:matvec-mystery}
Since matrix-vector products typically dominate the runtime cost, several papers use the number of matrix-vector products as a proxy for the runtime of \cref{alg:block_krylov_LRA}.
Existing analyses show that, if we choose block size $b=1$ \cite{MMM24} or $b = k$ \cite{MM15,TW23}, then $bq = \tilde\order(k / \sqrt{\varepsilon})$ matrix-vector products suffice to solve \Cref{problem:LRA}.
Lower bounds show that this complexity is optimal for rank-1 approximation when the norm $\nnnorm{\cdot}$ in \Cref{problem:LRA} is the spectral norm \cite{SER18,BN23a}.
For the Frobenius norm, RBKI achieves an improved complexity of $bq = \tilde\order(k/\varepsilon^{1/3})$ \cite{BCW22a,MMM24}, which is also essentially optimal \cite{BN23a}. 

\begin{figure}
    \centering
    \includegraphics[scale=.8]{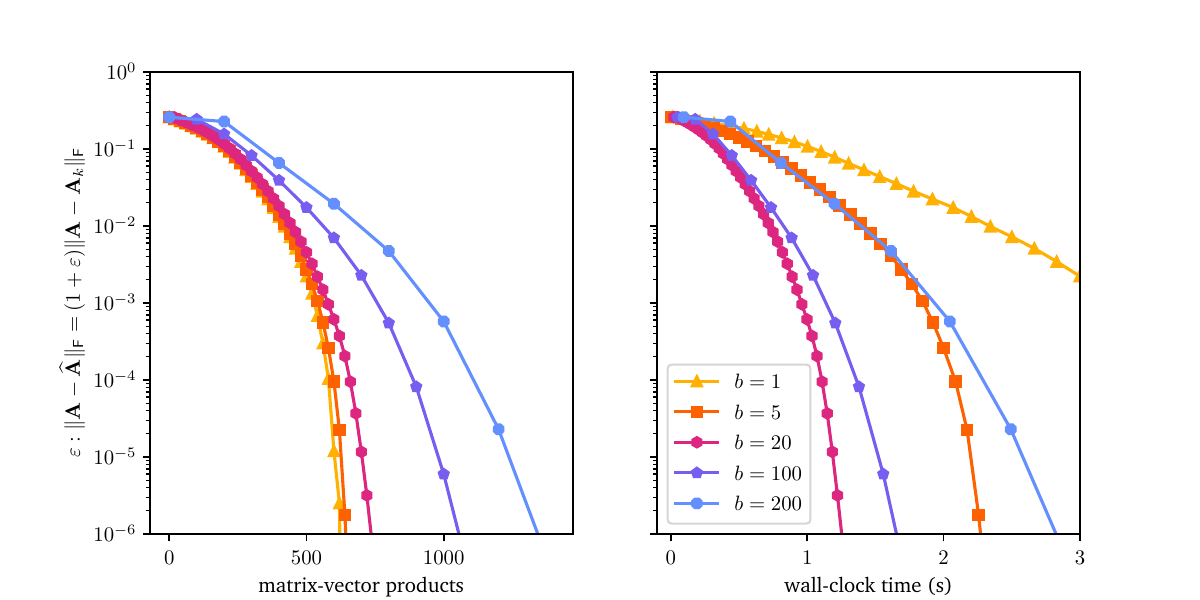}
    \caption{Accuracy versus cost to compute a rank $k=200$ approximation for a $2000\times 2000$ dense matrix $\mA$ using \Cref{alg:block_krylov_LRA}.
    \rev{See \cref{sec:more-figs} for details.}
    While smaller block sizes perform best in terms of matrix-vector products, the fastest choice in terms of wall-clock time is an \emph{intermediate} block size, $b=20$.
    Prior to our work, theoretical guarantees for intermediate block sizes  lagged behind those for $b=1$ or $k$.
    }
    \label{fig:intro}
\end{figure}

However, while the \rev{asymptotic analysis} of RBKI is basically complete for two extremes, $b=1$ and $b=k$, our understanding of ``intermediate'' choices for the block size, $1 < b < k$, remains incomplete.
This is despite the fact that, \textbf{in practice, $\boldsymbol b$ is most commonly chosen to lie somewhere \emph{between} $\boldsymbol{1}$ and $\boldsymbol{k}$}.
We will discuss why this is the case shortly. For general $b$, the best existing results for RBKI were developed in the work of Meyer, Musco, and Musco \cite{MMM24}, which shows:
\begin{importedtheorem}[RBKI with intermediate block size, \protect{\cite[Thm.~4.5]{MMM24}}] \label{impthm:mmm24}
    Fix a rank $k$ and block size $1\le b \leq k$.
    \Cref{alg:block_krylov_LRA} solves \Cref{problem:LRA} with probability at least \(0.99\) with total matrix-vector complexity $bq = \tilde \order(b(k-b + 1)/\sqrt{\varepsilon})$.
\end{importedtheorem}
\noindent Here, and going forward, $\tilde O$ suppresses logarithmic factors in the dimensions of $\mA$, the accuracy $\eps$, the rank-\(k\) condition number, and the inverse-gaps between the singular values.
(The last of these can be eliminated by an initial random perturbation of the input; see \cref{sec:smoothed-analysis}).

For small block sizes $b = \order(1)$ or almost-$k$ block sizes $b = k - \order(1)$, the above theorem recovers the expected $\tilde \order(k/\sqrt{\varepsilon})$ matrix-vector complexity for RBKI.
However, for intermediate block sizes the result deteriorates, and the  complexity can be as bad as $\tilde \order(k^2/\sqrt{\varepsilon})$ for block size $b = k/2$.

Importantly, this deterioration does not seem to be consistent with experimental results. Indeed, as mentioned above, the speed of RBKI in practice is often optimized by choosing a block size $b$ {between} $1$ and $k$. This phenomenon can be explained by the competition of two effects:
\begin{itemize}
    \item \textbf{Smaller is better.}
    While the worst-case matrix-vector product complexity for block sizes $b=1$ and $b=k$ match, the convergence of Krylov methods on a given instance \mA is highly dependent on $\mA$'s spectrum, so ``better-than-worst-case'' performance is often observed. Theoretical evidence suggests that\rev{, in the absence of small singular value gaps,} choosing $b=1$ typically requires fewer matrix-vector products to reach a desired level of accuracy for non-worst-case instances; see  \cite{MMM24} or \rev{\cref{rem:small-block-benefits}} for more \rev{discussion}. 
    Empirical support for this claim is provided in \Cref{fig:intro}: Smaller $b$ leads to fewer matrix-vector products.
    \item \textbf{Bigger is better.} Due to parallelism, caching, and dedicated hardware accelerators in modern computing systems, it is typically much faster to compute multiple matrix-vector products all at once (i.e., grouped into a matrix-matrix product) instead of one after the other. When run with block size $b$, \Cref{alg:block_krylov_LRA} groups matrix-vector products into blocks of size $b$, so for a fixed number of total matrix-vector products, we expect the method to run faster if $b$ is larger. 
    The speedups here can be large: On an author's laptop computer, multiplying $\mA \in \bbR^{2000\times 2000}$ with $\mG \in \bbR^{2000\times 200}$ was $9\times$ faster than multiplying $\mA$ with a single vector, $200$ times in sequence.
    The benefits of blocking into matrix-matrix products can be even more pronounced when $\mA$ is too large to store in fast memory, in which case the cost of reading $\mA$ from slow memory is a bottleneck, and using a larger block size $b\gg 1$ can be essentially ``free'' \cite{DTLKGYAHA17,CHLZ25}.
\end{itemize}
In practice, the ``smaller is better'' and ``bigger is better'' effects compete with each other, and the fastest runtime is usually obtained by an intermediate block size that is much larger than $1$ but much smaller than the rank $k$.
An example of this effect is shown in \Cref{fig:intro}.

Another, more prosaic, reason why intermediate block sizes are used is \emph{adaptivity}.
Often, practitioners are interested in computing a low-rank approximation $\hat\mA$ to meet an error tolerance $\nnnorm{\mA - \hat\mA} \le \tau$, 
and the rank $k$ is chosen adaptively at runtime to meet this tolerance \cite[\S6.3]{TW23}.
Since $k$ is not known in advance, it cannot be used to inform the choice of block size $b$.

\subsection{Our main result}
This work is concerned with \rev{addressing} the theory-practice gap for RBKI with intermediate block sizes.
We do so by providing an analysis of RBKI that achieves a \emph{linear dependence on $k$} for \emph{any} block size $1 \le b \le k$, improving on the best existing result, \cref{impthm:mmm24}, by up to a quadratic factor. 
Specifically, we prove the following:
\begin{theorem}[RBKI with any block size, gap-independent bounds, informal version]\label{thm:LRA}
Fix a rank $k$ and block size $1\le b \le k$.
\Cref{alg:block_krylov_LRA} solves \Cref{problem:LRA} with probability at least \(0.99\) with total matrix-vector complexity $bq = \tilde \order(k/\sqrt{\varepsilon})$.
\end{theorem}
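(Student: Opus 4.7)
The plan is to follow the polynomial-method framework for analyzing randomized block Krylov iteration, but to use a new \emph{two-stage} polynomial construction that decouples the ``acceleration'' and ``interpolation'' roles usually merged into a single Chebyshev polynomial. First, invoke the standard reduction (as in \cite{MMM24}) from \Cref{problem:LRA} to the existence of a close approximating subspace inside the Krylov space: it suffices to show that with probability at least $0.99$, every unit vector $\vu\in\range(\mU_k)$ admits some $\vz\in\cK_q(\mM,\mG)$ with $\norm{\vu-\vz}\le\eta$ for some $\eta = \poly(\varepsilon, 1/n, \mingap{k}, 1/\cond{k})$; both the $(1+\varepsilon)$-LRA guarantee and the per-direction singular value bounds then follow from standard principal-angle manipulations.

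Second, to construct the witness $\vz$, split $q = q_1 + q_2$ with $q_1 = \tilde\order(1/\sqrt{\varepsilon})$ and $q_2 = \tilde\order(k/b)$. Let $\phi$ be a shifted Chebyshev polynomial of degree $q_1$ with $\phi\ge\phi_{\min}$ on $[\sigma_k^2,\sigma_1^2]$ and $|\phi|\le\phi_{\max}$ on $[0,\sigma_k^2(1-\varepsilon)]$, where the ratio $\phi_{\min}/\phi_{\max}$ is exponentially large in $q_1\sqrt{\varepsilon}$. Look for $\vz$ of the form $\sum_{j=0}^{q_2-1}\phi(\mM)\mM^j\mG\vc_j$, which lies in $\cK_q(\mM,\mG)$ because $\phi(\cdot)(\cdot)^j$ has degree $\le q_1+q_2-1 \le q-1$. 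Decomposing via $\mA=\mU\mSigma\mV^\tp$ with $\mU = [\mU_k,\mU_{-k}]$, the top-$k$ and tail components of $\vz$ equal $\mU_k\phi(\mSigma_k^2)\mK_k\vc$ and $\mU_{-k}\phi(\mSigma_{-k}^2)\mK_{-k}\vc$, where
\begin{equation*}
\mK_k \defeq \bmat{\mH_k & \mSigma_k^2\mH_k & \cdots & \mSigma_k^{2(q_2-1)}\mH_k}\in\bbR^{k\times q_2 b}, \qquad \mH_k \defeq \mU_k^\tp\mG,
\end{equation*}
and $\mK_{-k},\mH_{-k}$ are defined analogously. For target $\vu = \mU_k\vy$, solve $\phi(\mSigma_k^2)\mK_k\vc = \vy$, giving $\norm{\vc}\le 1/(\phi_{\min}\cdot\sigma_{\min}(\mK_k))$. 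The tail error is then at most $\phi_{\max}\cdot\opnorm{\mK_{-k}}\cdot\norm{\vc}$, which falls below $\eta$ once $q_1$ is chosen so that $\phi_{\min}/\phi_{\max}$ dominates $\opnorm{\mK_{-k}}/\sigma_{\min}(\mK_k)$.

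Third, the main new ingredient is a high-probability quantitative lower bound $\sigma_{\min}(\mK_k)\ge 1/\poly(n,\mingap{k}^{-1},\cond{k})$, valid when $\mG$ is Gaussian and $q_2 b$ exceeds $k$ by at most a polylogarithmic factor. The matrix $\mK_k$ has tensor structure: its $i$-th row is $\vh_i^\tp\otimes [1,\sigma_i^2,\ldots,\sigma_i^{2(q_2-1)}]$ for independent Gaussian $\vh_i = \mH_k^\tp\ve_i\in\bbR^b$. Inspired by the sparse-linear-systems work of \cite{PV21,Nie22}, one would express the Gram determinant $\det(\mK_k\mK_k^\tp)$ (or that of a carefully chosen $k\times k$ submatrix) as a multivariate polynomial in the Gaussian entries of $\mG$, bound its ``leading'' coefficient below using the deterministic Vandermonde-like spread of the $\sigma_i^2$'s, and invoke the Carbery--Wright anti-concentration inequality. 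The Gaussian block $\mH_k$ supplies $b$ extra degrees of freedom per row, letting $q_2 \approx k/b$ suffice instead of the $q_2 \approx k$ required by a single-vector starting block.

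The main obstacle is this minimum singular value bound. In contrast to \cite{PV21,Nie22}, where only full-rankness is needed, the LRA setting requires a quantitative lower bound on $\sigma_{\min}(\mK_k)$, as its reciprocal directly controls the Chebyshev amplification and thus $q_1$. The tensor structure is also delicate when the top singular values cluster: the Vandermonde part becomes nearly rank-deficient, and the Gaussian perturbation must decouple the clusters with only polynomial loss. Finally, a modest bookkeeping step confirms that the combined complexity $bq = \tilde\order(b/\sqrt{\varepsilon} + k) \le \tilde\order(k/\sqrt{\varepsilon})$ holds uniformly for every $1\le b\le k$.
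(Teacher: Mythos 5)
Your high-level skeleton is the same as the paper's: you reduce \Cref{problem:LRA} to lower-bounding the smallest singular value of the random block Krylov matrix $\mU_k^\tp\bigl[\mG \;\; \mM\mG \;\; \cdots\bigr]$, with the Chebyshev stage $q_1$ playing the role that the paper delegates to the $(k,L)$-good machinery of \cite{MMM24} applied to the simulated starting block $\mB$. The problem is the key technical claim you leave as "the main obstacle": the bound $\sigma_{\min}(\mK_k)\ge 1/\poly(n,\mingap{k}^{-1},\cond{k})$ is false, not merely unproven. Already for $b=1$ (so $t=k$, $q_2\approx k$), $\mK_k=\diag(\vg)\mV$ with $\mV$ a $k\times q_2$ Vandermonde-type matrix with nodes in $[0,1]$; even with $q_2=\infty$ the Gram matrix has entries $1/(1-\lambda_i\lambda_{i'})$, a Cauchy-type matrix whose smallest eigenvalue is exponentially small in $k$ for well-separated nodes (e.g.\ equispaced, where $\mingap{k}\approx 1/k$), so no fixed polynomial in $n,\mingap{k}^{-1},\cond{k}$ can lower-bound $\sigma_{\min}(\mK_k)$. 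The correct statement (the paper's \Cref{thm:krylov_sigmamin_bound}) is exponentially small in $t=k/b$, i.e.\ $\log(1/\sigma_{\min})=\tilde\order(t)$. With that correction your split must become $q_1=\tilde\order(t/\sqrt{\varepsilon})$, which still yields $bq=\tilde\order(k/\sqrt{\varepsilon})$ and proves the theorem, but your claimed stronger complexity $bq=\tilde\order(b/\sqrt{\varepsilon}+k)$ cannot stand (for $b=1$ it would essentially contradict the $k/\sqrt{\varepsilon}$-type lower bounds of \cite{SER18,BN23a}).

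The second gap is the proposed route to the singular value bound. Applying Carbery--Wright to the Gram determinant $\det(\mK_k\mK_k^\tp)$ treats a polynomial of degree $\Theta(k)$ in the Gaussian entries, so the anti-concentration scale it certifies is exponentially small in $k$, and converting a determinant bound into a $\sigma_{\min}$ bound loses another factor $\sigma_{\max}^{k-1}$; both losses scale with $k$, not $t$, giving only $\log(1/\sigma_{\min})=\tilde\order(k)$ and hence $bq=\tilde\order(bk/\sqrt{\varepsilon})$ --- no better than \Cref{impthm:mmm24} for intermediate $b$. Getting the exponent to scale with $t$ rather than $k$ is exactly where the real work lies: the paper follows Peng--Vempala by splitting $\mK$ into $b$ blocks $\mQ_j^\tp\diag(\vg_j)\mV$ of size $t\times t$ (\Cref{lem:PV_reduction}), applies Nie's \emph{matrix} anti-concentration theorem to each block, and supplies the needed small-ball estimate through two new "non-sparsification" lemmas --- in particular choosing the single-vector Krylov matrix $\diag(\vg_i)\mV$ as the certificate for $\mQ_j$, which is what removes the $b\gtrsim\sqrt{k}$ restriction inherited from Nie's original argument. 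Your proposal as written contains neither this decomposition nor a substitute for it, so the central claim of the theorem remains unestablished.
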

See \cref{sec:proof-main} for a full version of this result and its proof.
Similar to prior work \cite{MM15,TW23,MMM24}, we also prove a version of \Cref{thm:LRA} that can be sharper when \mA has larger gaps between singular values; see \cref{sec:gap-dependent-bound}.
Overall, our results provide a rigorous justification for the use of Krylov iteration with {any} block size between $1 \leq b \leq k$, suggesting that practitioners can confidently select b to optimize efficiency on their hardware without compromising theoretical guarantees.

\subsection{Conditioning of Krylov matrices}
In the proof of \cref{thm:LRA}, it is simple to reduce to the case where \(k\) is divisible by \(b\).
Using this assumption, the first key observation in our proof is that the analysis of RBKI (\cref{alg:block_krylov_LRA}) can be reduced to understanding the minimum singular value of a random square block Krylov matrix
\begin{equation} \label{eq:krylov-mat-intro}
    \mK \coloneqq \begin{bmatrix}
    \mH & \mC \mH & \cdots & \mC^{t-1}\mH
\end{bmatrix} \in \bbR^{k\times k},
\end{equation}
where \(t=k/b\), $\mH \in \cN(0,1)^{k \times b}$ is a Gaussian starting block, and $\mC \in \bbR^{k\times k}$ is a real symmetric  positive semi-definite matrix with spectral norm \(\norm\mC_2=1\).
The columns of \mK are highly redundant; \rev{as in the power method, $\mC^{i}\mH$ will converge to the top $b$ eigenvectors of $\mH$ as $i$ increases.
Therefore,} we should expect that \mK is very nearly singular.
Fortunately, our analysis will only require proving that $\sigma_{\rm min}(\mK)$ is at worst exponentially small in $t$, i.e., \(\log(1/\sigma_{\rm min}(\mK)) \leq \orderish(t)\).

The problem of understanding the conditioning of a random block Krylov matrix is not new: several results in computational linear algebra require proving that $\mK$ is non-singular with high probability, i.e., that $\sigma_{\min}(\mK) > 0$ \cite{K95,EberlyGiesbrechtGiorgi:2006}. Even such a weak bound is non-trivial, but can be established via an application of the Schwartz--Zippel lemma (see \cref{sec:SZ} for details). 

Proving a quantitative lower bound is more difficult. 
Indeed, such a bound is central in Peng and Vempala's breakthrough work on solving sparse linear systems faster than matrix multiplication time \cite{PV21}
(though, they study a more general setting where $\mH$ can be a sparse random matrix).
\cite{PV21} inspired follow-up work of Nie \cite{Nie22}, who developed new matrix anti-concentration inequalities that can be used to obtain sharper bounds on $\sigma_{\min}(\mK)$.
However, even Nie's improvements can only show that \(\log(1/\sigma_{\rm min}(\mK)) \leq \orderish(t)\) when the block size is at least \(b\geq\tilde\Omega(\sqrt k)\).

In this paper, we simplify and extend Peng, Vempala, and Nie's arguments to hold for all $b \leq k$.
Our proof is relatively short, so we refer the reader to our main technical section, \cref{sec:krylov_min_singularvalue}, for details.
Overall, we obtain the following result:
\begin{theorem}[Random block Krylov matrices are not too ill-conditioned, informal version]\label{thm:krylov_sigmamin_bound_intro}
Let $\mC\in\bbR^{k\times k}$ be a symmetric positive semi-definite matrix with $\norm{\mC}_2 = 1$, let $\mH \sim \cN(0,1)^{k\times b}$ be a Gaussian starting block with $b \leq k$ columns, where $b$ divides $k$. Set $t= b/k$. For $\mK$ as defined in \eqref{eq:krylov-mat-intro}, we have
$\log(1/\sigma_{\rm min}(\mK))
\leq \tilde{\order}(t)$ with probability at least $0.99$.
\end{theorem}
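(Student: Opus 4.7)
The plan is to work in the eigenbasis of $\mC$ and exploit independence across the rows of the Gaussian block $\mH$. Because $\mH$ is invariant in distribution under left orthogonal rotations, we may assume $\mC = \diag(\lambda_1,\ldots,\lambda_k)$ with $0 \le \lambda_i \le 1$. In this basis $\mK$ has the row-wise Khatri--Rao form $\mK_{i,(j,l)} = \lambda_i^{j-1}\mH_{il}$: the $i$th row factors as the Kronecker product $\mW_{i,:}\otimes\mH_{i,:}\in\bbR^{tb}$, where $\mW\in\bbR^{k\times t}$ is the Vandermonde-like matrix with $\mW_{ij}=\lambda_i^{j-1}$. The rows of $\mK$ are therefore mutually independent, while each carries a rigid rank-one tensor structure.

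The next step is a leave-one-row-out reduction. For any invertible square matrix $\mM$, $\sigma_{\min}(\mM) \ge \min_i \mathrm{dist}(\mM_{i,:},\mathrm{span}\{\mM_{i',:}:i'\neq i\})/\sqrt{k}$, so it suffices to show that each row of $\mK$ is far from the span of the remaining rows with failure probability $o(1/k)$. Condition on $\{\mH_{i',:}\}_{i'\neq i}$; generically this fixes a $(k-1)$-dimensional row span, and let $\vec u_i\in\bbR^{tb}$ be a unit vector spanning its orthogonal complement, reshaped as $\mU_i\in\bbR^{t\times b}$. A direct computation gives
\[
d_i \;:=\; \mathrm{dist}(\mK_{i,:},\mathrm{span}\{\mK_{i',:}:i'\neq i\}) \;=\; |\langle \vec q_i,\,\mH_{i,:}\rangle|, \qquad \vec q_i \;:=\; \mU_i^\tp\mW_{i,:}^\tp \in \bbR^b.
\]
Since $\mH_{i,:}\sim\cN(0,\mI_b)$ is independent of $\vec q_i$, $d_i$ is the absolute value of a centered Gaussian of variance $\|\vec q_i\|^2$, so Gaussian anti-concentration yields $\Pr[d_i \le \eps\|\vec q_i\| \mid \{\mH_{i',:}\}_{i'\neq i}] \lesssim \eps$.

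The heart of the proof is the lower bound $\|\vec q_i\| \ge e^{-\tilde{\cO}(t)}$ with probability $1-o(1/k)$. Reading the columns of $\mU_i$ as coefficient vectors of polynomials $Q_{i,1}^{\ast},\ldots,Q_{i,b}^{\ast}$ of degree $<t$, the orthogonality constraints $\vec u_i \perp \mK_{i',:}$ become $\langle\mH_{i',:},\vec Q_i^{\ast}(\lambda_{i'})\rangle = 0$ for $i'\neq i$, the normalization $\|\vec u_i\|_2 = 1$ becomes $\sum_l\|Q_{i,l}^{\ast}\|_{\mathrm{coef}}^2 = 1$, and $\vec q_i = \vec Q_i^{\ast}(\lambda_i)$. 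We must therefore lower bound the evaluation at $\lambda_i$ of the unique unit-coefficient polynomial vector lying in the intersection of the $k-1$ random hyperplanes. My plan is to adapt the matrix anti-concentration arguments of Peng--Vempala and Nie to this setting: build the admissible subspace by revealing the independent rows $\mH_{i',:}$ one at a time, and track the smallest singular value of the induced constraint operator on the $tb$-dimensional polynomial space. Each reveal cuts off one dimension along an essentially Gaussian linear functional, but only the cuts that couple nontrivially with the evaluation map at $\lambda_i$ contribute loss to the final norm, and there are at most $t$ such independent couplings to account for; consequently the product of conditional anti-concentration losses telescopes to $e^{-\tilde{\cO}(t)}$ rather than $e^{-\tilde{\cO}(k)}$.

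A final union bound over $i=1,\ldots,k$ combines the Gaussian anti-concentration of the second paragraph with the main lemma to give $\min_i d_i \ge e^{-\tilde{\cO}(t)}$ with probability at least $0.99$; the $\sqrt{k}$ loss from the leave-one-row-out inequality is absorbed into $\tilde{\cO}$. The principal obstacle is the uniform-in-$b$ lower bound on $\|\vec Q_i^{\ast}(\lambda_i)\|$. The extremes are classical: for $b=k$ (so $t=1$), $\mK = \mH$ is square Gaussian and $\|\vec q_i\|$ is essentially the $i$th row norm of $\mH^{-\tp}$; for $b=1$ (so $t=k$), the row structure collapses to a standard single-vector Krylov matrix whose conditioning is governed by a Vandermonde determinant. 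The technical work is showing that the interpolating regime never costs more than a polylog factor beyond $t$ in the exponent regardless of $b$, which is precisely what our simplification and extension of Peng--Vempala and Nie provides.
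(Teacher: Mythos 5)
Your reduction is set up differently from the paper's, and the parts you actually prove are fine: diagonalizing $\mC$, the row-wise Khatri--Rao structure $\mK_{i,:}=\mW_{i,:}\otimes\mH_{i,:}$, the leave-one-\emph{row}-out bound $\sigma_{\min}(\mK)\ge k^{-1/2}\min_i d_i$, the identity $d_i=|\langle\vec{q}_i,\mH_{i,:}\rangle|$ with $\vec{q}_i$ measurable with respect to the other rows and hence independent of $\mH_{i,:}$, and the scalar anti-concentration plus union bound. (The paper instead leaves out one of the $b$ \emph{column} blocks, reduces to the $t\times t$ matrices $\mQ_j^\tp\diag(\vg_j)\mV$ via \cref{lem:PV_reduction}, and invokes Nie's matrix anti-concentration theorem, \cref{impthm:nie-small-ball}; your route would avoid that theorem entirely, needing only scalar anti-concentration, which would be a genuinely nicer argument \emph{if} it were complete.) The problem is that all of the difficulty of \cref{thm:krylov_sigmamin_bound} has been pushed into your claim that $\norm{\vec{q}_i}_2=\norm{\vec{Q}_i^{\ast}(\lambda_i)}_2\ge e^{-\tilde\order(t)}$ with probability $1-o(1/k)$, and for that claim you offer only an intention (``adapt Peng--Vempala and Nie'', ``only the cuts that couple nontrivially with the evaluation map contribute loss, and there are at most $t$ such independent couplings''), not an argument.

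This is a genuine gap, for two concrete reasons. First, your sketch makes no use whatsoever of the spectral structure of $\mC$---no eigenvalue gaps, no Vandermonde conditioning---yet some quantitative gap dependence is unavoidable: with $b=1$ (so $t=k$) and two coinciding eigenvalues, $\mK$ is exactly singular and $\vec{q}_i$ can vanish, and in general the true bound carries a factor $(\mingap{k}/\cond{k})^{\order(t)}$, hidden in the statement's $\tilde\order$ only because the paper's convention suppresses logarithms of $\cond{k}/\mingap{k}$. The entire technical content of the paper's proof lies exactly at this point: \cref{lem:V-not-sparse} (via Gautschi's bound, \cref{impthm:gautschi}, and the disk argument around the zeros of $p_{\vy}$) and \cref{lem:Q-not-sparse} (via the single-vector Krylov certificate of \cref{thm:V-non-sparse-cert}) quantify how the random constraints interact with the Vandermonde evaluation map; your plan contains no analogue of this input. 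Second, the counting ``at most $t$ independent couplings'' is unsubstantiated and not obviously the right quantity: the evaluation map $\vec{Q}\mapsto\vec{Q}(\lambda_i)$ from the $tb$-dimensional coefficient space to $\bbR^b$ has rank $b$, not $t$, and the unit normal $\vec{u}_i$ entangles all $k-1$ revealed rows and all eigenvalues simultaneously, so no mechanism is given for why only $t$ of the $k-1$ conditional anti-concentration losses should accumulate; the naive telescoping over all constraints gives $e^{-\tilde\order(k)}$, which is useless for intermediate $b$. Until you prove the lower bound on $\norm{\vec{Q}_i^{\ast}(\lambda_i)}_2$ with explicit gap dependence---for instance by constructing a non-sparsification-type certificate adapted to your row-wise orthogonal complement---the proposal restates the theorem in a different parametrization rather than proving it.
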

\noindent 
The full version of \cref{thm:krylov_sigmamin_bound_intro} is stated as \cref{thm:krylov_sigmamin_bound} in \cref{sec:krylov_min_singularvalue}.
We hope that this result---and the techniques used in its proof---will be of interest beyond low-rank approximation.
In addition to studying Peng--Vempala-style algorithms, our techniques could be useful in studying the numerical stability of randomized Krylov subspace methods for linear systems \cite{NT24,BG25,YHBS25}.

\section{Preliminaries} \label{sec:notation}

Throughout, we work with a matrix $\mA \in \bbR^{n\times d}$ with singular values are $\sigma_1,\sigma_2,\ldots$.
It will also be helpful to work with the matrix $\mM = \mA\mA^\tp$, whose eigenvalues are denoted $\lambda_i \defeq \sigma_i^2$.
Throughout, $\norm{\,\cdot\,}_p$ denotes the $\ell_p$ norm of a vector or the operator $\ell_p\to\ell_p$ norm of a matrix. We use $\orth(\cV)$ to denote the operation that returns any set of orthonormal columns spanning a subspace $\cV$. We let 
$\cK_{q}(\mM,\mB)$ denote the degree $q-1$ Krylov subspace for matrix $\mM$ and starting block $\mB$, as in \cref{eq:krylov_def}.

\subsection{Gaps and asymptotic notation}

The analysis of RBKI depends on gaps between singular values.
To this end, we define the minimum relative singular value gap and rank-\(k\) condition number:
\begin{equation}
    \label{eqn:gap_def}
    \mingap{k} \coloneqq \min_{i=1,\ldots,k-1} \frac{\sigma_i^2 - \sigma_{i+1}^2}{\sigma_i^2} = \min_{i=1,\ldots,k-1} \frac{\lambda_i - \lambda_{i+1}}{\lambda_i}\quad \text{and}\quad \cond{k} \defeq \frac{\sigma_1^2}{\sigma_{k-1}^2}.
\end{equation}
The additive singular value gap can be bounded by the relative gap and condition number:
\begin{observation} [Additive to relative gaps]
\label{lem:additive_gap_to_Delta}
It holds that
\begin{equation*}
    \min_{i=1, \ldots, k-1} \big[\lambda_{i} - \lambda_{i+1}\big] = \min_{i=1, \ldots, k-1} \big[\sigma_i^2 - \sigma_{i+1}^2\big] \geq  \lambda_1 \cdot \frac{\mingap{k}}{ \cond{k}}.
\end{equation*}
\end{observation}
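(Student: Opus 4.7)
The plan is to unwind the definitions of $\mingap{k}$ and $\cond{k}$ and chain together two termwise estimates. The first equality in the statement is immediate, since $\lambda_i = \sigma_i^2$ by definition, so the real work is in the inequality.

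For the inequality, I would first record the termwise bound that is baked into the definition of $\mingap{k}$: for every index $i \in \{1,\ldots,k-1\}$,
\[
\lambda_i - \lambda_{i+1} \geq \mingap{k} \cdot \lambda_i.
\]
This just rewrites the defining minimum. Next, I would use that the singular values (and hence the $\lambda_i$) are in non-increasing order, so for $i \le k-1$ we have $\lambda_i \geq \lambda_{k-1}$. Substituting gives $\lambda_i - \lambda_{i+1} \geq \mingap{k}\, \lambda_{k-1}$, and this bound is independent of $i$, so it survives the minimum on the left-hand side.

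To finish, I would invoke the definition $\cond{k} = \sigma_1^2/\sigma_{k-1}^2 = \lambda_1/\lambda_{k-1}$ to rewrite $\lambda_{k-1} = \lambda_1/\cond{k}$, producing exactly $\lambda_1 \cdot \mingap{k}/\cond{k}$. There is no real obstacle here; the only point requiring a moment of care is to notice that the minimum in the definition of $\mingap{k}$ and the minimum on the left-hand side both run over $i = 1,\ldots,k-1$, so the termwise bound applies uniformly before taking the minimum, and that $\cond{k}$ is defined in terms of $\sigma_{k-1}^2$ rather than $\sigma_k^2$, which is precisely what makes $\lambda_{k-1}$ (and not $\lambda_k$) the relevant lower bound on $\lambda_i$ for $i \leq k-1$.
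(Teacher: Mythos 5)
Your proof is correct, and it is exactly the definitional unwinding the paper has in mind: the paper states this as an Observation with no written proof, so the intended argument is precisely your chain $\lambda_i - \lambda_{i+1} \geq \mingap{k}\,\lambda_i \geq \mingap{k}\,\lambda_{k-1} = \lambda_1\,\mingap{k}/\cond{k}$. Your remark about $\cond{k}$ being defined via $\sigma_{k-1}^2$ (so that $\lambda_{k-1}$, not $\lambda_k$, is the right lower bound) is the one detail worth noting, and you handled it correctly.
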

\noindent
See \cref{rem:gaps} for more discussion about additive and relative gaps in our analysis.
For this paper, the asymptotic notation, $\tilde \order$, suppress logarithmic factors in the dimensions of $\mA$, the accuracy $\eps$, the failure probability $\delta$, the condition number $\cond{k}$, and the inverse-minimum gap $1/\mingap{k}$.

\subsection{(Anti-)concentration of Gaussian variables}

We require a couple of concentration results.
First, we use the following very crude bound for the norm of a Gaussian vector:

\begin{importedlemma}[Norm concentration, simplified version of \protect{\cite[Lem.~1]{LM00}}]
\label{lem:chisq_concentration}
Let $\vg \sim \cN(0,1)^p$ be a vector with $p$, i.i.d. Gaussian entries.
For any $\xi\geq1$, $\Pr[ \norm{\vg}_2^2 \geq 5\xi p ] < \exp(-\xi p)$.
\end{importedlemma}

Second, we require a standard anti-concentration bound for the Gaussian distribution, which follows by noting that the pdf of the standard normal distribution is bounded by $1/2$. 
\begin{proposition}[Gaussian anti-concentration]
\label{lem:gaussian_anti_concentration}
    Let $g \sim \mathcal{N}(0,\sigma^2) $. For any $\alpha>0$, $\Pr[|g|<\alpha]< \frac{\alpha}{\sigma}$.
\end{proposition}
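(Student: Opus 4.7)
\emph{Proof sketch.} The plan is to scale down to the unit-variance case and then bound the probability by the maximum value of the density times the length of the interval.

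First, by writing $g = \sigma z$ with $z \sim \mathcal{N}(0,1)$, the event $\{|g| < \alpha\}$ is identical to $\{|z| < \alpha/\sigma\}$, so it suffices to prove $\Pr[|z| < \beta] < \beta$ for all $\beta > 0$, where $\beta \defeq \alpha/\sigma$. This scaling step is what lets us treat $\sigma$ as a mere normalization, and it is the reason the final bound is dimensionally $\alpha/\sigma$ rather than something depending on $\sigma$ alone.

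Second, I would use that the standard normal density $\phi(x) = (2\pi)^{-1/2} e^{-x^2/2}$ attains its maximum at $x = 0$ with value $(2\pi)^{-1/2}$. Integrating this pointwise upper bound over the interval $(-\beta, \beta)$ of length $2\beta$ yields
\[
\Pr[|z| < \beta] \;=\; \int_{-\beta}^{\beta} \phi(x)\,dx \;\le\; 2\beta \cdot \frac{1}{\sqrt{2\pi}} \;=\; \sqrt{\tfrac{2}{\pi}}\,\beta \;<\; \beta,
\]
where the final strict inequality is the numerical fact $\sqrt{2/\pi} < 1$. Substituting $\beta = \alpha/\sigma$ recovers the stated bound.

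There is no real obstacle; the argument is a one-step integration. The only subtlety worth flagging is that the strict inequality in the proposition comes precisely from the slack $\sqrt{2/\pi} \approx 0.798 < 1$. The excerpt's preceding sentence mentions the weaker uniform bound $\phi \le 1/2$, which would likewise yield $\Pr[|z|<\beta] \le \beta$; to recover the strict inequality from that route one would additionally observe that $\phi(x) < 1/2$ on a set of positive measure, so the integral is strictly less than $\beta$.
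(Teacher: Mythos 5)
Your proof is correct and follows essentially the same route as the paper, which simply bounds the Gaussian density by a constant (the paper uses $1/2$, you use the sharper $1/\sqrt{2\pi}$) and integrates over the interval of length $2\alpha$. Your closing remark about how strictness is recovered from the $1/2$ bound is also accurate, so nothing is missing.
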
 

\section{Randomized block Krylov with any block size}
\label{sec:overview}
This section \rev{contains the proof of} our main theorem, \Cref{thm:LRA}, assuming a bound on the minimum singular value of a randomized block Krylov matrix. That minimum singular value bound (our main technical contribution, \Cref{thm:krylov_sigmamin_bound}) is then proven in \Cref{sec:krylov_min_singularvalue}.

\subsection{Good starting blocks}
Our general approach to analyzing \Cref{alg:block_krylov_LRA} follows \cite{MMM24}, which builds on \cite{MM15}.
The key property we require for the analysis is a measure of \rev{the degree of overlap} between the stating block $\mB$ and the dominant eigenspace of $\mA\mA^\tp$.

\begin{definition}
[Good starting block]\label{def:kL_good}
Let $\mA\in\mathbb{R}^{n\times d}$ be a matrix with top $k$ left singular vectors  $\mU_k \in \bbR^{n\times k}$. For $\ell \geq k$, a matrix $\mB\in \bbR^{n\times \ell}$ is a $(k,L)$-good starting block for $\mA$ if $\norm{(\mU_k^\tp\mQ)^{-1}}_2^2\leq L$ for some matrix $\mQ\in\bbR^{n\times k}$ with orthonormal columns that lie in $\range(\mB)$.
\end{definition}

\cite{MM15} implicitly analyzes the quality of the low-rank approximation produced by RBKI on a starting block $\mB$ in terms of the parameters $k$ and $L$.
This approach is made explicit in \cite{MMM24}:

\begin{importedtheorem}[$(k,L)$-good implies RBKI works, \protect{\cite[App.~F]{MMM24}}]
\label{thm:kLgood_implies_LRA}
Let $\mB\in\bbR^{n\times \ell}$ for $\ell \geq k$, fix  $s>0$, and define $\mZ \defeq \orth(\cK_s(\mM,\mB))$.
If $\mB$ is $(k,L)$-good for $\mA$ and
$s = \order( \log( {n L }/{\varepsilon} ) / \sqrt{\varepsilon} )$, then $\hat\mA = \mZ \llbracket \mZ^\tp\mA\rrbracket_k$ solves \cref{problem:LRA}.
\end{importedtheorem}

In their original paper for large-block case $b = k$ \cite{MM15}, Musco and Musco analyze RBKI by showing that a Gaussian starting block $\mG \in \bbR^{n\times b}$ is $(k,\poly(n))$-good with high probability.
\Cref{thm:kLgood_implies_LRA} then implies RBKI solves \Cref{problem:LRA} with $\tilde \order(k/\sqrt{\varepsilon})$ matrix-vector products.

Observe that the \((k,L)\)-good property requires that \mB has at least \(k\) columns, preventing us from directly applying \cref{thm:kLgood_implies_LRA} to block sizes \(b < k\).
Instead, we use the \emph{simulated starting block argument} of \cite{MMM24}.
For ease of exposition, we will initially assume that $b$ evenly divides into $k$, and we define $t = k/b$.
The essence of the simulated starting block argument is that a $q$-iteration block Krylov subspace with starting block \mG coincides with a $s=(q-t+1)$-iteration block Krylov subspace associated with the \emph{simulated starting block}
\begin{equation}\label{eqn:simulated_starting_block}
\mB 
\defeq \begin{bmatrix}
\mG & \mM \mG & \cdots & \mM^{t-1} \mG
\end{bmatrix}\in\bbR^{n\times k}.
\end{equation}
Specifically, we have the following result:

\begin{observation}[Simulated starting block, \protect{\cite[Sec.~3.1]{MMM24}, \cite[Eq.~3.1]{CH23}}] \label{fact:shared_krylov}
Let $\mB$ be as in \Cref{eqn:simulated_starting_block}.
Then $\cK_{s+t-1}(\mM,\mG) = \cK_{s}(\mM,\mB)$.
\end{observation}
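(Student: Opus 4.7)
The plan is to prove both Krylov subspaces coincide with a common reference subspace, namely $\range([\mG, \mM\mG, \ldots, \mM^{s+t-2}\mG])$, by unfolding each side according to the definition in \cref{eq:krylov_def} and collecting columns. This is essentially a bookkeeping identity between two different parameterizations of the same span, so the proof will be short.

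First, I would unfold the right-hand side. By definition of $\mB$ in \cref{eqn:simulated_starting_block}, left multiplication by $\mM^i$ shifts the block of powers:
\[
\mM^i \mB = \begin{bmatrix} \mM^i \mG & \mM^{i+1}\mG & \cdots & \mM^{i+t-1}\mG \end{bmatrix}.
\]
Concatenating the $s$ blocks $\mB, \mM\mB, \ldots, \mM^{s-1}\mB$ therefore yields a matrix whose columns include every power $\mM^j \mG$ for $j \in \{0, 1, \ldots, (s-1)+(t-1)\}$, with some powers appearing more than once. Since duplicated columns do not change the span, this gives $\cK_s(\mM,\mB) = \range([\mG, \mM\mG, \ldots, \mM^{s+t-2}\mG])$. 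Second, applying \cref{eq:krylov_def} directly to $\cK_{s+t-1}(\mM,\mG)$ shows this subspace is spanned by $\mG, \mM\mG, \ldots, \mM^{(s+t-1)-1}\mG = \mM^{s+t-2}\mG$, which is exactly the same reference subspace. Equating the two yields the claimed identity.

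There is no real obstacle in this argument, since it is a purely algebraic identity about polynomial applications of $\mM$ to $\mG$. The only mild subtlety worth flagging is the index convention in \cref{eq:krylov_def}: the subspace $\cK_q$ includes powers $\mM^0, \ldots, \mM^{q-1}$, so $\cK_{s+t-1}$ reaches up to $\mM^{s+t-2}$, which matches the maximum power $\mM^{(s-1)+(t-1)}\mG$ appearing on the $\mB$ side. Once this indexing is checked, the two matrices $[\mB, \mM\mB, \ldots, \mM^{s-1}\mB]$ and $[\mG, \mM\mG, \ldots, \mM^{s+t-2}\mG]$ differ only by column repetitions and ordering, so their ranges agree.
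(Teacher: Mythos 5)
Your argument is correct: both subspaces are the span of $[\mG,\ \mM\mG,\ \ldots,\ \mM^{s+t-2}\mG]$, every intermediate power is indeed covered on the $\mB$ side (for any $0\le m\le s+t-2$ one can write $m=i+j$ with $0\le i\le s-1$, $0\le j\le t-1$), and your indexing check against the convention of \cref{eq:krylov_def} is exactly the point that needs verifying. The paper itself states this as an observation imported from \cite{MMM24,CH23} without proof, and your column-collection bookkeeping is precisely the standard justification those works use, so there is nothing further to add.
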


We see that to analyze block Krylov iteration with the starting block $\mG$ of size $b$, it suffices to analyze block Krylov iteration on the simulated starting block $\mB$ of size $k$.
Since \mB has size $k$, we are now able to utilize \cref{thm:kLgood_implies_LRA}.
To do so, we will establish bounds on the $(k,L)$-goodness of \mB that improve on \cite{MMM24} for intermediate block sizes $1\ll b\ll k$. 

\subsection{Reduction to the conditioning of a square Krylov matrix} \label{sec:reduction-to-square}

The first key observation in our analysis is that the $(k,L)$-goodness property (\Cref{def:kL_good}) can be reduced to bounding an expression involving the minimum singular value of $\mU_k^\tp \mB$.

\begin{lemma}[From minimum singular value to $(k,L)$-good]
    \label{lem:kl-good-to-min-sing-val-square}
    Let $\mA\in\mathbb{R}^{n\times d}$ be a matrix with top $k$ left singular vectors  $\mU_k \in \bbR^{n\times k}$.
    Suppose $bt=k$ and $\sigma_1 =1$.
    Then, with probability at least \(1 - \delta \), we have that the simulated starting block \( \mB\) defined in \Cref{eqn:simulated_starting_block} is \((k,L)\)-good for \mA with
    \[
        L = 
        \frac{5 \max\{kn,t\log(1/\delta)\}}{\sigma_{\rm min}^2(\mU_k^\tp\mB)}.
    \]
\end{lemma}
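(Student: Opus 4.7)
The plan is to build the witness orthonormal basis $\mQ$ in \Cref{def:kL_good} from a thin QR factorization of $\mB$, so that $\|(\mU_k^\tp \mQ)^{-1}\|_2$ can be bounded directly in terms of $\sigma_{\min}(\mU_k^\tp \mB)$ together with a simple norm estimate for $\mB$. Throughout we may assume $\sigma_{\min}(\mU_k^\tp \mB) > 0$, since otherwise $L = \infty$ and the claim is vacuous; this also forces $\mB$ to have full column rank $k$.

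First, I would write $\mB = \mQ \mR$ where $\mQ \in \bbR^{n \times k}$ has orthonormal columns spanning $\range(\mB)$ and $\mR \in \bbR^{k\times k}$ is (nonsingular) upper triangular. Then $\mU_k^\tp \mB = (\mU_k^\tp \mQ)\mR$, so
\begin{equation*}
(\mU_k^\tp \mQ)^{-1} = \mR\,(\mU_k^\tp \mB)^{-1},
\qquad
\bigl\|(\mU_k^\tp \mQ)^{-1}\bigr\|_2 \;\le\; \frac{\|\mR\|_2}{\sigma_{\min}(\mU_k^\tp \mB)}.
\end{equation*}
Since $\mR = \mQ^\tp \mB$ and $\mQ$ has orthonormal columns, $\|\mR\|_2 = \|\mB\|_2 \le \|\mB\|_\fro$, so it remains to bound $\|\mB\|_\fro^2$ above by $5\max\{kn,t\log(1/\delta)\}$ with probability at least $1-\delta$.

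Next, using $\|\mM\|_2 = \sigma_1^2 = 1$ (the lemma's normalization), for every $0\le i\le t-1$ we have $\|\mM^i \mG\|_\fro^2 = \tr(\mG^\tp \mM^{2i}\mG) \le \|\mM\|_2^{2i}\,\|\mG\|_\fro^2 \le \|\mG\|_\fro^2$. Summing over the $t$ blocks of $\mB$ gives $\|\mB\|_\fro^2 \le t\,\|\mG\|_\fro^2$. Since $\|\mG\|_\fro^2 \sim \chi^2_{nb}$, I would apply \Cref{lem:chisq_concentration} with $\xi = \max\{1,\log(1/\delta)/(nb)\}$ to obtain $\|\mG\|_\fro^2 \le 5\max\{nb,\log(1/\delta)\}$ with probability at least $1-\delta$. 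Multiplying by $t$ and using $tb = k$ yields $\|\mB\|_\fro^2 \le 5\max\{kn,\,t\log(1/\delta)\}$, and combining with the QR bound completes the argument.

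There is no serious obstacle: the only mildly nonobvious step is the QR trick that converts a bound on $\sigma_{\min}(\mU_k^\tp \mB)$ (which is about $\mB$ itself) into a bound on $\|(\mU_k^\tp \mQ)^{-1}\|_2$ (which is about the orthonormalized range of $\mB$) at the cost of a factor of $\|\mR\|_2 = \|\mB\|_2$. Everything else is a one-line application of submultiplicativity plus the standard $\chi^2$ tail estimate \Cref{lem:chisq_concentration}, with the max inside $L$ arising naturally from the two regimes of the tail bound.
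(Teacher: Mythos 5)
Your proposal is correct and follows essentially the same route as the paper: the paper also reduces $\|(\mU_k^\tp\mQ)^{-1}\|_2^2$ to $\sigma_{\max}^2(\mB)/\sigma_{\min}^2(\mU_k^\tp\mB)$ (via the substitution $\mQ\vx=\mB\vc$ rather than your QR identity $(\mU_k^\tp\mQ)^{-1}=\mR(\mU_k^\tp\mB)^{-1}$, a cosmetic difference), then bounds $\|\mB\|_\fro^2\le t\|\mG\|_\fro^2$ and applies \Cref{lem:chisq_concentration} with the same choice $\xi=\max\{1,\log(1/\delta)/(bn)\}$. No gaps.
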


This observation was not made in \cite{MM15,MMM24}, who established $(k,L)$-goodness using a different approach.
\Cref{lem:kl-good-to-min-sing-val-square} allows us to reduce the analysis of RBKI with an intermediate block size to a random matrix theory question about understanding the minimum singular value of $\mU_k^\tp \mB$, a Krylov matrix.
Indeed, owing to the rotational invariance of Gaussian matrices and the rotational invariance of singular values, without loss of generality we can assume \mM is diagonal, in which case $\mU_k$ consists of the first $k$ columns of the identity matrix and $\mU_k^\tp \mB$ takes the form 
\begin{equation} \label{eq:UkB_is_block_Krylov}
    \mU_k^\tp \mB = \begin{bmatrix}
        \mH & \mC \mH & \cdots & \mC^{t-1}\mH
    \end{bmatrix} \in \bbR^{k\times k},
\end{equation}
where 
$\mC = \diag(\lambda_1,\ldots,\lambda_k)$ and $\mH \sim \cN(0,1)^{k\times b}$.
In particular, we recognize $\mU_k^\tp \mB$ as a \emph{square} random block Krylov matrix.
We will study the minimum singular value of such matrices in \cref{sec:krylov_min_singularvalue}.

\begin{proof}[Proof of \Cref{lem:kl-good-to-min-sing-val-square}]
Let $\mQ = \orth(\range(\mB))$.
If $\rank(\mB) < k$, the statement is vacously true with $L=\infty$. 
Therefore, we assume $\mB$ and $\mQ$ are rank-$k$ and hence injective.
Since \mQ has orthonormal columns, $\norm{\mQ\vx}_2 = \norm{\vx}_2$.
Therefore,
\[
L = \norm{(\mU_k^\tp \mQ)^{-1}}_2^2
= \max_{\vz\neq\vec{0}} \frac{\norm{(\mU_k^\tp \mQ)^{-1}\vz}_2^2}{\norm{\vz}_2^2}
= \max_{\vx\neq\vec{0}} \frac{\norm{\vx}_2^2}{\norm{(\mU_k^\tp \mQ)\vx}_2^2}
= \max_{\vx\neq\vec{0}} \frac{\norm{\mQ\vx}_2^2}{\norm{\mU_k^\tp (\mQ\vx)}_2^2}.
\]
Since $\range(\mQ) = \range(\mB)$, for any vector $\vx$, there is a vector $\vc$ so that $\mQ\vx = \mB \vc$.
Thus,
\[
\max_{\vx\neq\vec{0}} \frac{\norm{\mQ\vx}_2^2}{\norm{\mU_k^\tp (\mQ\vx)}_2^2}
= \max_{\vc\neq \bv{0}} \frac{\norm{\mB\vc}_2^2}{\norm{\mU_k^\tp\mB\vc}_2^2}
\leq \frac{\sigma_{\rm max}^2(\mB)}{\sigma_{\rm min}^2(\mU_k^\tp\mB)}.
\]
We directly bound the maximum singular value of \(\mB\).
Observe
\[
    \sigma_{\rm max}^2(\mB)
    \leq \norm{\mB}_\fro^2
    = \sum_{i=0}^{t-1} \norm{\mM^i \mG}_\fro^2
    \leq \sum_{i=0}^{t-1} \norm{\mM^i}_2^2\norm{\mG}_\fro^2
    \leq t \|\bv{G}\|_\fro^2.
\]
Here, we use that $\norm{\mM}_2 = \sigma_1^2 = 1$.
Since $\mG$ has $nb$ i.i.d. Gaussian entries,  \Cref{lem:chisq_concentration} implies that $\Pr[\|\mG\|_\fro^2 > 5\xi bn] < \exp(-\xi bn)$ for any $\xi \ge 1$.
Ergo,
\[
\Pr\left[ \sigma_{\rm max}(\mB) \geq 5\xi kn \right]
\le 
\Pr\left[ \sigma_{\rm max}(\mG) \geq 5\xi bn \right]
\leq \exp(-\xi bn). 
\]
Set $\xi = \max \{ 1, \log(1/\delta)/bn\}$ to complete the proof.
\end{proof}

\subsection{Proof of \cref{thm:LRA}} \label{sec:proof-main}

By \Cref{eq:UkB_is_block_Krylov}, we see that $\mU_k^\tp\mB$ is a square block Krylov matrix.
As such, we can bound its minimum singular value with \cref{thm:krylov_sigmamin_bound} (formal version of \cref{thm:krylov_sigmamin_bound_intro}), proven below in \cref{sec:krylov_min_singularvalue}.
Using this observation, analysis of RBKI is immediate.
To remove the assumption $bt = k$, we first make the following observation, which we prove in \cref{sec:proofs}.

\begin{lemma}[Smaller rank is easier]\label{lem:kpLgood_klgood}
    If $\mB$ is $(k',L)$-good for $\mA$, then it is $(k,L)$-good for any $k\leq k'$.
\end{lemma}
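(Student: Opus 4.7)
The plan is to produce, from a witness $\mQ'$ of the $(k',L)$-good property, a witness $\mQ$ of the $(k,L)$-good property via a dual-basis construction. A natural but unsuccessful first attempt would be to restrict $\mQ'$ to $k$ columns, or to orthogonalize the projection of $\mU_k$ onto $\range(\mQ')$; neither of these yields a clean bound on $\norm{(\mU_k^\tp \mQ)^{-1}}_2^2$. The right move is instead to work with the \emph{dual basis} to $\mU_{k'}$ that sits inside $\range(\mQ')$, and simply discard its last $k'-k$ columns.

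Concretely, the steps are as follows. First, observe that without loss of generality one may take $\mU_k$ to consist of the first $k$ columns of $\mU_{k'}$. Next, let $\mQ' \in \bbR^{n \times k'}$ be the witness provided by the $(k',L)$-good property, so that $\mQ'$ has orthonormal columns in $\range(\mB)$ and $\norm{(\mU_{k'}^\tp \mQ')^{-1}}_2^2 \leq L$ (in particular, $\mU_{k'}^\tp \mQ'$ is invertible). Define the dual basis
\[
    \mV \defeq \mQ' (\mU_{k'}^\tp \mQ')^{-1} \in \bbR^{n \times k'},
\]
which satisfies $\mU_{k'}^\tp \mV = \mI_{k'}$ by construction and $\norm{\mV}_2 \leq \norm{\mQ'}_2 \cdot \norm{(\mU_{k'}^\tp \mQ')^{-1}}_2 \leq \sqrt{L}$, since $\mQ'$ has orthonormal columns. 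Let $\mV_k$ denote the first $k$ columns of $\mV$: they lie in $\range(\mQ') \subseteq \range(\mB)$, and restricting the identity $\mU_{k'}^\tp \mV = \mI_{k'}$ to its top-left $k \times k$ block gives $\mU_k^\tp \mV_k = \mI_k$, so $\mV_k$ has full column rank. A thin QR factorization $\mV_k = \mQ \mR$ then provides the required matrix $\mQ$, whose $k$ orthonormal columns still lie in $\range(\mB)$. Finally, $\mU_k^\tp \mQ = \mU_k^\tp \mV_k \mR^{-1} = \mR^{-1}$, so
\[
    \norm{(\mU_k^\tp \mQ)^{-1}}_2 = \norm{\mR}_2 = \sigma_{\max}(\mV_k) \leq \sigma_{\max}(\mV) \leq \sqrt{L},
\]
which yields $\norm{(\mU_k^\tp \mQ)^{-1}}_2^2 \leq L$, as desired.

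The main obstacle is conceptual rather than technical: one must recognize that the correct object to restrict is not $\mQ'$ itself, nor the projection of $\mU_k$ onto $\range(\mQ')$, but rather the dual basis $\mV = \mQ' (\mU_{k'}^\tp \mQ')^{-1}$, whose defining relation $\mU_{k'}^\tp \mV = \mI_{k'}$ is preserved after dropping the last $k'-k$ columns of $\mV$. Once this construction is identified, the rest of the argument reduces to a short identity-based calculation together with one application of submultiplicativity of the operator norm.
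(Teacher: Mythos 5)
Your proof is correct and takes essentially the same route as the paper: the paper's matrix $\mC$ (the first $k$ columns of $(\mU_{k'}^\tp\mQ')^{-1}$) makes $\mQ'\mC$ exactly your $\mV_k$, so both arguments orthonormalize the first $k$ columns of the dual basis and bound $\norm{(\mU_k^\tp\mQ)^{-1}}_2$ by $\norm{(\mU_{k'}^\tp\mQ')^{-1}}_2$ via submultiplicativity and a column-submatrix bound. Your QR-factorization phrasing is just a repackaging of the paper's choice of an orthonormal basis for $\range(\mC)$.
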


With this preparation, we now state and prove a formal version of \Cref{thm:LRA}:

\begin{reptheorem}{thm:LRA}
Let $t = \lceil k/b \rceil$ and define $k' \defeq bt$.
Then running \Cref{alg:block_krylov_LRA} with 
\[
q  = \order\left( \frac{t}{\sqrt{\varepsilon}} \log\left( \frac{\cond{k'}}{\mingap{k'}} \right)
+ \frac{1}{\sqrt{\varepsilon}} \log\left( \frac{n}{\delta \varepsilon} \right)\right)
\]
solves \Cref{problem:LRA} with probability at least $1-\delta$.
The total matrix-vector complexity is
\begin{equation*}
    bq  = \order\left( \frac{k}{\sqrt{\varepsilon}} \log\left( \frac{\cond{k'}}{\mingap{k'}} \right)
+ \frac{b}{\sqrt{\varepsilon}} \log\left( \frac{n}{\delta \varepsilon} \right)\right).
\end{equation*}
Here $\mingap{k'} := \min_{i=1, \ldots, k'-1} (\lambda_i - \lambda_{i+1} )/\lambda_i$ is the minimum (relative) gap between the eigenvalues $\lambda_i$ of $\mA\mA^\tp$ and $\cond{k'}:= \lambda_1 / \lambda_{k'-1}$ is the rank-$k$ condition number.
\end{reptheorem}

\begin{proof}
Since \cref{alg:block_krylov_LRA} and \cref{problem:LRA} are both scale-invariant, we are free to assume $\sigma_1 = 1$.
As noted in \cref{eq:UkB_is_block_Krylov}, $\mU_k^\tp\mB$ is a Krylov matrix. 
Hence, by \cref{thm:krylov_sigmamin_bound} and \cref{eq:UkB_is_block_Krylov}, we have that
\[
\Pr\left[
\sigma_{\rm min}\left( \mU_k^\tp\mB \right) \geq \const \cdot \frac{(\delta/2)^5}{k^{14}}\left(\frac{\mingap{k'}}{6\cond{k'}}\right)^{6(t-1)}\right] 
\geq 1-\frac{\delta}{2}.
\]
Combining this with \Cref{lem:kl-good-to-min-sing-val-square} with failure probability $\delta/2$ and union bounding, we conclude that, with probability at least $1-\delta$,
$\mB$ is $(k',L)$-good for 
\begin{equation*}
    L = \order \left( \frac{k^{29}n\log(1/\delta)}{\delta^{10}} \left(\frac{6\cond{k'}}{\mingap{k'}}\right)^{6(t-1)}\right).
\end{equation*}
By \Cref{lem:kpLgood_klgood}, \mB is also $(k,L)$-good.
The advertised conclusion follows by \Cref{fact:shared_krylov,thm:kLgood_implies_LRA}.
\end{proof}

\begin{remark}[Additive and relative gaps] \label{rem:gaps}
    We note  that \Cref{thm:LRA} (and related results \Cref{thm:krylov_sigmamin_bound,thm:LRA_gap}) hold with the same proofs with the ratios $\mingap{k}/\cond{k}$ replaced by the \emph{minimum additive gap} $\Delta^{\rm add}_k \coloneqq \min_{i = 1,\ldots,k-1} (\lambda_i - \lambda_{i+1}) / \lambda_1$.
    The additive gap $\Delta^{\rm add}_k$ is always larger than the ratio $\mingap{k}/\cond{k}$, in view of \Cref{lem:additive_gap_to_Delta}.
    We use the relative gaps $\mingap{k}$ and condition numbers $\cond{k}$ in this paper for consistency with past work \cite{MMM24}. 
    We conjecture that \Cref{thm:LRA,thm:LRA_gap} hold with $\mingap{k}$ in place of $\mingap{k}/\cond{k}$; see \cref{sec:conclusion} for some further discussion.
\end{remark}

\subsection{Gap-dependent bound} \label{sec:gap-dependent-bound}

Similar to classical analysis of block Krylov iteration \cite{Saa80} and prior works on \emph{randomized} block Krylov iteration \cite{MM15,TW23,MMM24}, we can obtain improved convergence guarantees when the matrix $\mA$ has a gap in its spectrum.
Specifically, define the gap from the \variableth{k} to \variableth{\ell} squared singular values as
\[
\gap_{{k\to \ell}}
\defeq 
\frac{\sigma_{k}^2 - \sigma_{\ell}^2}{\sigma_k^2}.
\]
Note that $\gap_{k\to \ell}$ represents the gap between \emph{individual} singular values, whereas the minimum gap $\Delta_k$ in \Cref{eqn:gap_def} represent the \emph{minimum} gap between all pairs of squared singular values.
To prove gap-dependent bounds, we rely on the following established result:

\begin{importedtheorem}[$(k,L)$-good implies RBKI works, gap-dependent, \protect{\cite[app.~F]{MMM24}}]
\label{thm:kLgood_implies_LRA_gap}
Let $\mB\in\bbR^{n\times b}$ and fix $s>0$. 
Define
$\mZ \defeq \orth(\cK_s(\mM,\mB))$.
Then $\hat\mA = \mZ \llbracket \mZ^\tp\mA\rrbracket_k$ solves \Cref{problem:LRA} provided $\mB$ is $(\ell,L)$-good for $\mA$ and
$s = \order( \log( {n L }/{\varepsilon} ) / \sqrt{\gap_{k\to \ell}})$
\end{importedtheorem}

This result shows that for a constant-size gap, \(\gap_{k\to \ell} = \Theta(1)\), RBKI converges in just \(q = \order(\log(nL/\eps))\) iterations; the dependence \(\eps\) is only logarithmic.
Combining this result with \Cref{thm:krylov_sigmamin_bound,lem:kl-good-to-min-sing-val-square,fact:shared_krylov} immediately yields the following result:

\begin{theorem}[RBKI with any block size, gap-dependent bounds]
\label{thm:LRA_gap}
Fix block size $1\le b \le k$ and choose any $\ell>k$. Define $t \defeq \lceil \ell/b \rceil$ and define $\ell' \defeq bt$.
Then running \Cref{alg:block_krylov_LRA} for 
\[
q  = \order\left( \frac{t}{\sqrt{\gap_{k\to\ell}}} \log\left( \frac{\cond{\ell'}}{\mingap{\ell'}} \right)
+ \frac{1}{\sqrt{\gap_{k\to\ell}}} \log\left( \frac{n}{\delta \varepsilon} \right)\right) \text{\,\, iterations}
\]
solves \Cref{problem:LRA} with probability at least $1-\delta$.
The total matrix-vector complexity is
\begin{equation*}
    bq  = \order\left( \frac{\ell}{\sqrt{\gap_{k\to \ell}}} \log\left( \frac{\cond{\ell'}}{\mingap{\ell'}} \right)
+ \frac{b}{\sqrt{\gap_{k\to \ell}}} \log\left( \frac{n}{\delta \varepsilon} \right)\right).
\end{equation*}
Here $\mingap{k'} := \min_{i=1, \ldots, k'-1} (\lambda_i - \lambda_{i+1} )/\lambda_i$ is the minimum (relative) gap between the eigenvalues $\lambda_i$ of $\mA\mA^\tp$ and $\cond{k'}:= \lambda_1 / \lambda_{k'-1}$ is the condition number.
\end{theorem}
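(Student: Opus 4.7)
The plan is to mirror the proof of \Cref{thm:LRA} from \cref{sec:proof-main} almost verbatim, with two substitutions: (i) the rank parameter $k$ used in the $(k,L)$-goodness analysis is replaced by the larger rank $\ell$ (and its rounded-up version $\ell'=bt$); and (ii) the gap-independent implication \Cref{thm:kLgood_implies_LRA} is swapped out for its gap-dependent counterpart, \Cref{thm:kLgood_implies_LRA_gap}. Indeed, the hypotheses of the two $(k,L)$-good implications differ only in their required iteration count $s$, so the bulk of the work (bounding the minimum singular value of a square Krylov matrix and converting this to a goodness bound) is unchanged.

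Concretely, I would first use scale invariance of both \Cref{alg:block_krylov_LRA} and \Cref{problem:LRA} to assume $\sigma_1=1$. Next, I would set $t = \lceil \ell/b\rceil$, $\ell' = bt$, and form the simulated starting block $\mB = [\mG \;\; \mM\mG \;\; \cdots \;\; \mM^{t-1}\mG] \in \bbR^{n\times \ell'}$, so that \Cref{fact:shared_krylov} identifies $\cK_q(\mM,\mG)$ with $\cK_{q-t+1}(\mM,\mB)$. By rotational invariance of the Gaussian and the singular values, $\mU_{\ell'}^\tp \mB$ is a square $\ell' \times \ell'$ random block Krylov matrix of exactly the form covered by \Cref{thm:krylov_sigmamin_bound} (applied with rank $\ell'$ in place of $k$), so with probability at least $1-\delta/2$,
\[
\sigma_{\min}(\mU_{\ell'}^\tp\mB) \ge \const \cdot \frac{(\delta/2)^5}{(\ell')^{14}}\left(\frac{\mingap{\ell'}}{6\cond{\ell'}}\right)^{6(t-1)}.
\]
Plugging this into \Cref{lem:kl-good-to-min-sing-val-square} (with rank $\ell'$, failure probability $\delta/2$) and union-bounding shows that, with probability at least $1-\delta$, $\mB$ is $(\ell',L)$-good for $\mA$ with $\log L = O\bigl(t\log(\cond{\ell'}/\mingap{\ell'}) + \log(n/\delta)\bigr)$. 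By \Cref{lem:kpLgood_klgood}, since $\ell \le \ell'$, $\mB$ is also $(\ell,L)$-good.

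Finally, I would apply \Cref{thm:kLgood_implies_LRA_gap} to $\mB$ with goodness rank $\ell$, taking
\[
s = O\!\left(\frac{\log(nL/\varepsilon)}{\sqrt{\gap_{k\to\ell}}}\right) = O\!\left(\frac{t\log(\cond{\ell'}/\mingap{\ell'}) + \log(n/(\delta\varepsilon))}{\sqrt{\gap_{k\to\ell}}}\right).
\]
The corresponding outer iteration count is $q = s + t - 1$; since $\gap_{k\to\ell}\le 1$, the extra $t-1$ is absorbed into the first summand of $s$, giving the advertised bound on $q$, and multiplication by $b$ yields the matrix-vector complexity. The main "obstacle" here is purely bookkeeping: one must keep straight that $\Cref{thm:krylov_sigmamin_bound}$ is invoked with the padded rank $\ell'$ (not $k$ or $\ell$), so that the resulting gap/condition quantities are $\mingap{\ell'}$ and $\cond{\ell'}$ as stated, and that the polynomial prefactors $(\delta/2)^5/(\ell')^{14}$ in the singular value bound become only logarithmic contributions after taking $\log L$ and thus are harmless inside the $O(\cdot)$.
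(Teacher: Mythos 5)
Your proposal is correct and follows essentially the same route as the paper, which simply notes that the proof of \Cref{thm:LRA_gap} is identical to that of \Cref{thm:LRA} with \Cref{thm:kLgood_implies_LRA_gap} used in place of \Cref{thm:kLgood_implies_LRA} (and the rank bookkeeping done at $\ell$, $\ell'$ instead of $k$, $k'$). Your more explicit write-out of the $\ell'$-padding, the $(\ell',L)\Rightarrow(\ell,L)$ step via \Cref{lem:kpLgood_klgood}, and the absorption of the extra $t-1$ iterations matches the intended argument.
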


The proof of \Cref{thm:LRA_gap} is identical to \Cref{thm:LRA}, except we use \Cref{thm:kLgood_implies_LRA_gap} in place of \Cref{thm:kLgood_implies_LRA}.

\begin{remark}[Benefits of small block size] \label{rem:small-block-benefits}
  In in the matrix-vector complexity bound of \Cref{thm:LRA_gap}, we observe that the block size $b$ and accuracy $\eps$, but not target rank $k$ or $\ell$ (which is at least $k$), participate in the second term.
  This already suggests that choosing a smaller block size $b$ reduces our overall number of matrix-vector products, supporting the empirical observation in \cref{fig:intro}.
  Our course, we recall from \Cref{sec:matvec-mystery} that the matrix-vector complexity of RBKI is not perfectly reflective of the real-world runtime of RBKI.
\end{remark}

\subsection{Ensuring a singular value gap by perturbing the input} \label{sec:smoothed-analysis}

The careful reader may notice that the RBKI bounds \Cref{thm:LRA,thm:LRA_gap} become vacuous if the minimum relative gap $\mingap{k}$ is small.
Fortunately, the minimum gap can be controlled by adding a random perturbation to the input.
The idea is as follows:
\begin{itemize}
    \item Fix $\gamma > 0$, and generate a diagonal perturbation \mD with entries drawn uniformly from $[-\gamma,\gamma]$.
    \item Run RBKI on the shifted matrix $\mA\mA^\tp + \mD$, obtaining $\mZ = \orth(\cK_q(\mA\mA^\tp + \mD))$.
    \item Output $\hat\mA = \mZ \llbracket \mZ^\tp \mA\rrbracket_k$ , as usual.
\end{itemize}
This perturbation approach was proposed by Meyer, Musco, and Musco \cite[sec.~5]{MMM24} \rev{and analyzed using a well-known result of Minami \cite{Min96}}.
\rev{Concretely, referring to \cite[Lem.~5.1]{MMM24}, we have:}
\begin{importedlemma}[Smoothed analysis]
    Let $\hat{\lambda}_i$ denote the eigenvalues of $\mA\mA^\tp + \mD$, defined above, and assume $\gamma \le \norm{\mA}_2$.
    There exists a universal constant $\const > 0$ such that, with probability $\delta$,
    \begin{equation*}
        \min_{i = 1,\ldots,n-1} \frac{\hat{\lambda}_i - \hat{\lambda}_{i+1}}{\hat{\lambda}_i} \ge \const \cdot \frac{\delta}{n^2} \cdot \frac{\gamma^2}{\norm{\mA}_2^2}.
    \end{equation*}
\end{importedlemma}

To obtain the best possible time complexities, we may choose the perturbation magnitude $\gamma \coloneqq \varepsilon \sigma_{k+1}^2/3n$ as in \cite[Cors.~5.1 \& 5.2]{MMM24}, yielding the matrix-vector complexity $bq = \order( k \log ({n\cond{k'}}/{\delta \varepsilon})/\sqrt{\varepsilon})$, independent of the gap $\mingap{k'}$.
(As in the formal statement of \Cref{thm:LRA}, $k' = b \cdot \lceil k / b \rceil \approx k$.)
In practice, we can choose the perturbations to be on the order of the machine precision $\gamma = \order(\varepsilon_{\rm mach}\norm{\mA}_2)$, which ensures that $\cond{k'} / \mingap{k'} \le \poly(n) / \poly(\varepsilon_{\rm mach})$.
RBKI with random perturbations was used empirically in recent work of Kressner and Shao \cite{KS24}.

\section{Bounding the minimum singular value of a Krylov matrix}
\label{sec:krylov_min_singularvalue}

In this section we prove the full version of \Cref{thm:krylov_sigmamin_bound_intro}, a bound on the minimum singular value of a square random block Krylov matrix.
Our approach involves a simplification and extension of the techniques of Peng and Vempala \cite{PV21} and Nie \cite{Nie22}.

\begin{theorem}[Random block Krylov matrices are not too ill-conditioned]
\label{thm:krylov_sigmamin_bound}
    Let $\mC\in\bbR^{k\times k}$ be a symmetric matrix with distinct eigenvalues $1=\lambda_1 > \cdots > \lambda_k\geq 0$, let $\mH \sim \cN(0,1)^{k\times b}$ be a Gaussian starting block, and set $t= b/k$.
Then there is a universal constant $\const > 0$ such that
\[
\Pr\left[
\sigma_{\rm min}\left( 
\begin{bmatrix}
    \mH & \mC \mH & \cdots & \mC^{t-1}\mH
\end{bmatrix}
\right) \geq \const \cdot \frac{\delta^5}{k^{15}}\left(\frac{\mingap{k}}{6\cond{k}}\right)^{6(t-1)}\right] 
\geq 1-\delta,
\]
where $\mingap{k} := \min_{i=1, \ldots, k-1} (\lambda_i - \lambda_{i+1} )/\lambda_i$ is the minimum (relative) eigenvalue gap and $\cond{k}:= \lambda_1 / \lambda_{k-1}$ is the rank-$k$ condition number.
\end{theorem}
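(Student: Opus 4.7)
The plan is to bound $\sigma_{\min}(\mK)$ via the elementary inequality $\sigma_{\min}(\mK) \geq |\det(\mK)|/\|\mK\|_2^{k-1}$, valid for any square matrix, which reduces the task to (i) an upper bound on $\|\mK\|_2$ and (ii) a high-probability lower bound on $|\det(\mK)|$. By rotational invariance of the standard Gaussian and of singular values, I will assume without loss of generality that $\mC = \diag(\lambda_1, \ldots, \lambda_k)$, so that $K_{i, (j,\ell)} = \lambda_i^j h_{i\ell}$. The operator norm bound $\|\mK\|_2 \leq \sqrt{t}\,\|\mH\|_\fro = \tilde{O}(k)$ with probability $1 - O(\delta)$ is immediate from $\|\mC\|_2 = 1$, submultiplicativity, and the chi-squared tail estimate of \Cref{lem:chisq_concentration}.

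The structural heart of the argument is a Cauchy--Binet-style formula for $\det(\mK)$. Grouping the columns of $\mK$ by the Gaussian block vector $\vh_\ell$ they use, each group $\mW_\ell \defeq [\vh_\ell\ \mC\vh_\ell\ \cdots\ \mC^{t-1}\vh_\ell]$ factors as $\diag(\vh_\ell)\,\mV$, where $\mV_{ij} = \lambda_i^j$ is a fixed $k\times t$ Vandermonde. Expanding $\det(\mK)$ via Leibniz and collecting permutations according to how rows are partitioned among the $b$ blocks yields
\[
\det(\mK) = \sum_{\{S_1,\ldots,S_b\}} \pm \prod_{\ell=1}^{b} \det(\mV_{S_\ell}) \prod_{i \in S_\ell} h_{i\ell},
\]
summed over all partitions of $[k]$ into blocks of size $t$, where $\mV_{S_\ell}$ is the $t\times t$ Vandermonde on $\{\lambda_i : i \in S_\ell\}$. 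Because distinct partitions give distinct monomials in the independent Gaussians $\{h_{i\ell}\}$, the second moment collapses to
\[
\mathbb{E}\bigl[\det(\mK)^2\bigr] = \sum_{\{S_1,\ldots,S_b\}} \prod_{\ell} \det(\mV_{S_\ell})^2.
\]
Retaining only the single \emph{interleaved} partition $S_\ell = \{\ell,\,\ell+b,\,\ldots,\,\ell+(t-1)b\}$ gives a clean lower bound, since for this partition \Cref{lem:additive_gap_to_Delta} guarantees any two nodes in $S_\ell$ differ by at least $b\,\mingap{k}/\cond{k}$.

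Converting this second-moment bound into a pointwise high-probability lower bound on $|\det(\mK)|$ is where the main difficulty lies. A naive application of a Carbery--Wright-type anti-concentration to $\det(\mK)$ viewed as a single degree-$k$ polynomial in $\mH$ loses a factor $\epsilon^{1/k}$, inflating the exponent of $\mingap{k}/\cond{k}$ by a factor of $k$ and producing the useless bound $(\mingap{k}/\cond{k})^{\Theta(kt)}$. To hit the target exponent $6(t-1)$, I would follow the incremental strategy of Peng--Vempala \cite{PV21} and Nie \cite{Nie22}: reveal the blocks $\mC^j \mH$ one at a time in the order $j=0,1,\ldots,t-1$, and at each step apply scalar Gaussian anti-concentration (\Cref{lem:gaussian_anti_concentration}) to the signed distance of the new block's columns from the span of those previously revealed. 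Because each step only introduces a polynomial of degree $O(1)$ in the remaining free Gaussian variables, the accumulated exponent on $\mingap{k}/\cond{k}$ grows as $O(t)$ rather than $O(kt)$. A union bound across the $t$ steps, combined with the combinatorial prefactors from each Vandermonde evaluation, produces the claimed $\delta^5/k^{14}$ factor, and dividing by $\|\mK\|_2^{k-1}$ from the first paragraph closes the argument.
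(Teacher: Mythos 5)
Your determinant-based frame cannot reach the stated bound, and the loss is not (as you suggest) an artifact of a lossy anti-concentration step that a Peng--Vempala/Nie-style refinement could repair. The inequality $\sigma_{\rm min}(\mK)\ge |\det(\mK)|/\norm{\mK}_2^{k-1}$ already costs a factor $\norm{\mK}_2^{k-1}=k^{\Theta(k)}$, but the deeper problem is visible in your own second-moment formula: \emph{every} partition term is a product of $b$ Vandermonde determinants, each carrying $\binom{t}{2}$ gap factors, so even the single interleaved partition you retain contributes only $\prod_\ell \det(\mV_{S_\ell})^2 \ge (b\,\mingap{k}/\cond{k})^{k(t-1)}$ — exponent $k(t-1)$, not $t-1$. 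Since $|\det(\mK)|$ is genuinely of this size (e.g.\ for equispaced eigenvalues $\lambda_i = 1-(i-1)/k$ one checks $\E[\det(\mK)^2]\le b^k\max_{\text{partition}}\prod_\ell\det(\mV_{S_\ell})^2 = \e^{-\Omega(kt)}$), no anti-concentration argument can make $|\det(\mK)|$ larger than it actually is: the best this route can certify is $\log(1/\sigma_{\rm min}(\mK)) = O(kt\log(\cond{k}/\mingap{k}) + k\log k)$, a factor of roughly $k$ worse in the exponent than the claimed $6(t-1)$, which would only recover the $\tilde\order(k^2/\sqrt{\eps})$-type complexity the theorem is designed to beat. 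Your proposed rescue — revealing the blocks $\mC^j\mH$ one at a time and applying scalar anti-concentration at each step — is also unsound on its own terms, because those blocks contain no fresh randomness: all $kb$ Gaussians live in $\mH$, and $\mC\mH,\ldots,\mC^{t-1}\mH$ are deterministic functions of it, so there is nothing independent to condition on step by step.

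The paper avoids the determinant entirely. It permutes $\mK$ into the form $[\,\diag(\vg_1)\mV \;\cdots\; \diag(\vg_b)\mV\,]$ and uses the Peng--Vempala splitting $\sigma_{\rm min}(\mK)\ge b^{-1/2}\min_j \sigma_{\rm min}(\mQ_j^\tp\diag(\vg_j)\mV)$, where $\mQ_j$ spans the orthogonal complement of the other $b-1$ blocks; this isolates the randomness of a single Gaussian column $\vg_j$ in each $t\times t$ piece. Each piece is then handled by Nie's \emph{matrix} anti-concentration theorem, whose small-ball hypothesis is verified by showing that $\mV\vy$ has at least $k-(t-1)$ large entries (via Gautschi's bound on inverse Vandermonde matrices) and that $\mQ_j\vx$ has at least $t+1$ large entries (via a single-vector Krylov matrix $\diag(\vg_i)\mV$, $i\ne j$, used as a ``non-sparsification certificate''), so a pigeonhole index gives the conditional Gaussian $\vx^\tp\mQ_j^\tp\diag(\vg_j)\mV\vy$ a variance of size $(\mingap{k}/6\cond{k})^{O(t)}/\poly(k)$. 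That is how the exponent stays $O(t)$; your outline would need to be rebuilt on some such per-block decomposition rather than on $\det(\mK)$.
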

\noindent We have not made a serious attempt to optimize the powers on the terms $\delta$ and $k$ in this expression, opting to present the simplest possible proof with somewhat worse powers.
Indeed, when used to analyze \Cref{alg:block_krylov_LRA}, the lower bound on the minimum singular value appears inside a logarithm, so the key term above is the $\order(t)$ bound in the exponent.

\subsection{Reduction to Vandermonde form}
\Cref{thm:krylov_sigmamin_bound} concerns a square block Krylov matrix 
\begin{equation*}
    \mK_{\rm orig} = \begin{bmatrix}
    \mH & \mC \mH & \cdots & \mC^{t-1}\mH
\end{bmatrix}\in\bbR^{k\times k},
\end{equation*}
where $\mH \sim \cN(0,1)^{k\times b}$ is a Gaussian matrix and \mC is a real symmetric matrix with eigenvalues $1 = \lambda_1 \ge \cdots \ge \lambda_k \ge 0$.
Owing to the rotational invariance of Gaussian matrices and the rotational invariance of singular values, we assume, without loss of generality, that $\mC$ is diagonal.
Then, by permuting the columns of $\mK_{\rm orig}$, we obtain the convenient form
\begin{equation} \label{eq:vandermonde_form}
\mK \defeq \bmat{
        \diag(\vg_1)\mV & \diag(\vg_2)\mV & \cdots & \diag(\vg_b)\mV
    }\in\bbR^{k\times k},
\end{equation}
where $\vg_i\sim \cN(0,1)^k$ are the columns of \mH and \mV is the Vandermonde matrix
\[
\mV \defeq \bmat{
            1 & \lambda_1 & \cdots & \lambda_1^{t-1} \\
            1 & \lambda_2 & \cdots & \lambda_2^{t-1} \\
            \vdots & \vdots & \ddots & \vdots \\
            1 & \lambda_k & \cdots & \lambda_k^{t-1}
        }.
\]
Since $\mK_{\rm orig}$ and $\mK$ are equivalent up to orthogonal transforms, they have the same singular values.
We will prove \Cref{thm:krylov_sigmamin_bound} by manipulating the Vandermonde form \Cref{eq:vandermonde_form}.

\subsection{Block Krylov matrices are almost surely full-rank}

We begin by establishing a simple qualitative result: the Krylov matrix \mK is nonsingular with probability 1.

\begin{proposition}[Block Krylov matrices are nonsingular]\label{thm:block_krylov_fullrank}
Let $\mC \in \bbR^{k\times k}$ be a symmetric matrix with distinct eigenvalues, let $\mH \sim \cN(0,1)^{k\times b}$ be a Gaussian starting block, and set $t = b/k$.
Then $[\begin{matrix} \mH & \mC\mH & \cdots & \mC^{t-1}\mH\end{matrix}]$ is nonsingular with probability 1.
\end{proposition}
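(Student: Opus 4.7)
The plan is to reduce the claim to an algebraic one: $\det(\mK)$ is a polynomial in the $kb$ Gaussian entries of $\mH$, so by the Schwartz--Zippel argument discussed in \cref{sec:SZ} (together with the absolute continuity of the Gaussian distribution with respect to Lebesgue measure), it suffices to show that this polynomial is not identically zero. Equivalently, I must exhibit one specific value of $\mH$ for which $\mK$ is invertible. By passing to the Vandermonde form \eqref{eq:vandermonde_form}, I may assume without loss of generality that $\mC = \diag(\lambda_1, \ldots, \lambda_k)$ with the $\lambda_i$ all distinct.

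For the explicit construction, since $tb = k$, I would partition $\{1,\ldots,k\}$ into $b$ disjoint subsets $S_1, \ldots, S_b$, each of size $t$, and choose the $j$-th column of $\mH$ to be the indicator vector $\vh_j \defeq \sum_{i \in S_j} \ve_i$. Because $\mC$ is diagonal, $\mC^\ell \vh_j = \sum_{i \in S_j} \lambda_i^\ell \ve_i$ is supported on $S_j$ for every $\ell$. Consequently, after simultaneously permuting the rows of $\mK$ so that the coordinates in $S_1$ come first, then those in $S_2$, and so on, and permuting the columns so that the $t$ columns $\{\mC^\ell \vh_j : 0 \le \ell \le t-1\}$ associated with each fixed $j$ appear consecutively, the matrix $\mK$ becomes block diagonal with $b$ square blocks of size $t \times t$. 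The $j$-th diagonal block is the $t \times t$ Vandermonde matrix with nodes $\{\lambda_i : i \in S_j\}$, which is nonsingular because these $t$ nodes are distinct (a consequence of the distinctness of all of $\lambda_1,\ldots,\lambda_k$). Hence $\det(\mK) \ne 0$ for this choice of $\mH$, proving the polynomial is nontrivial.

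The main conceptual obstacle is the first step: producing an explicit $\mH$ that makes the highly redundant-looking Krylov matrix nonsingular. The partition-into-indicators choice above decouples the $b$ starting directions so that $\mK$ becomes a block-diagonal Vandermonde matrix, which is why the distinct-eigenvalue hypothesis immediately closes the argument. Once this is in place, the Schwartz--Zippel step is routine and yields $\Pr[\det(\mK) = 0] = 0$, completing the proof.
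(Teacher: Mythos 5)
Your proposal is correct and follows essentially the same route as the paper: reduce via Schwartz--Zippel to exhibiting one nonsingular instance, then choose the columns of $\mH$ to be indicator vectors of a partition of the eigenvalue indices into $b$ groups of size $t$, so that (after permutation) the Krylov matrix becomes block diagonal with $t\times t$ Vandermonde blocks on distinct nodes. The paper's witness $\hat{\mH} = \diag(\vec{1},\ldots,\vec{1})$ in the eigenbasis of $\mC$ is exactly this construction, so no substantive difference remains.
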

We provide a self-contained proof of \Cref{thm:block_krylov_fullrank} via the Schwartz--Zippel lemma in \cref{sec:SZ}, noting that similar results and proofs are common in research on algebraic complexity and linear algebra over finite fields \cite{K95}.

\subsection{\rev{Reduction to subblocks}}
\label{sec:reductions}

To lower bound the minimum singular value of \(\mK\), we employ a method \rev{used} by Peng and Vempala \cite[Sec.~5]{PV21}.
For each $j$, introduce a matrix
\begin{equation*}
\mQ_j\defeq \orth\big(
    \operatorname{range}\big(
        \big[\begin{matrix}
          \diag(\vg_1) \mV
          & \cdots &
          \diag(\vg_{j-1}) \mV
          &
          \diag(\vg_{j+1}) \mV
          & \cdots &
          \diag(\vg_b) \mV
       \end{matrix}\big]
   \big)^\perp \big).
\end{equation*}
Here $\cU^\perp$ indices the orthogonal compliment of a vector space $\cU$, and $\orth(\cV)$ returns any orthonormal basis for a vector space $\cV$.
Since $\mK$ is nonsingular with probability 1 (\Cref{thm:block_krylov_fullrank}), $\mQ_j$ has exactly $t$ columns with probability 1 as well.

\begin{corollary}[Dimensions of $\mQ_j$]\label{fact:Qj_shape}
If $\lambda_1 > \cdots > \lambda_k$, then $\mQ_j\in\bbR^{k\times t}$ with probability 1.
\end{corollary}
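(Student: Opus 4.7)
The plan is to argue that $\mQ_j$ has exactly $t$ columns by a simple dimension count, leveraging the already-established fact that $\mK$ is nonsingular almost surely (\Cref{thm:block_krylov_fullrank}).

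First I would observe that the matrix whose range is being taken inside the definition of $\mQ_j$ is obtained from $\mK$ by deleting the block $\diag(\vg_j)\mV$ of exactly $t$ columns. Since $\mK \in \bbR^{k \times k}$ has $bt = k$ columns in total, the remaining matrix has $k - t$ columns in $\bbR^k$. By \Cref{thm:block_krylov_fullrank}, under the hypothesis $\lambda_1 > \cdots > \lambda_k$, the full matrix $\mK$ is nonsingular with probability one, so its $k$ columns are linearly independent on this event. Any subcollection of linearly independent vectors remains linearly independent, so the $k - t$ retained columns span a subspace of dimension exactly $k - t$ almost surely.

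Next I would invoke the standard fact that if $\cU \subseteq \bbR^k$ has dimension $d$, then $\cU^\perp$ has dimension $k - d$. Applying this with $d = k - t$ yields $\dim(\cU^\perp) = t$, and therefore any orthonormal basis returned by $\orth(\cdot)$ has exactly $t$ columns. Hence $\mQ_j \in \bbR^{k \times t}$ with probability one, as claimed.

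There is no real obstacle here: the corollary is essentially a packaging of \Cref{thm:block_krylov_fullrank} with the rank-nullity theorem. The only thing to be careful about is noting that the event of nonsingularity of $\mK$ implies the linear independence of every subset of columns, so the conclusion holds on the same full-measure event for each $j = 1, \ldots, b$ (and hence, by a union bound over the finitely many $j$'s, simultaneously for all $j$ if needed later).
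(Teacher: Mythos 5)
Your argument is correct and matches the paper's reasoning: the paper also deduces the corollary directly from the almost-sure nonsingularity of $\mK$ (\Cref{thm:block_krylov_fullrank}), since the $k-t$ retained columns are then linearly independent and the orthogonal complement of their span has dimension $t$. Your additional remarks about the same full-measure event covering every $j$ are a fine (if implicit in the paper) bookkeeping point, but nothing here departs from the paper's approach.
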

Next, we show that bounding the minimum singular value of \mK can be accomplished by ``breaking it into $b$ smaller pieces'' which isolate the dependence on the individual Gaussian \(\vg_i\) vectors in a particularly simple way.
The idea is due to Peng and Vempala \cite[Sec.~5]{PV21}.

\begin{lemma}[Breaking Krylov matrix into pieces] \label{lem:PV_reduction}
It holds that
\begin{equation*}
		\sigma_{\rm min}(\mK) \geq \frac{1}{\sqrt{b}} \min_{1\le j \le b} \sigma_{\rm min}\left(\mQ_j^\tp \diag(\vg_j) \mV\right).
\end{equation*}    
\end{lemma}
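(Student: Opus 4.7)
The plan is to exploit the defining property of the $\mQ_j$'s, namely that $\mQ_j^\tp$ annihilates every block $\diag(\vg_i)\mV$ with $i \neq j$. This lets us decouple the contribution of the $j$th block of a test vector, and then a pigeonhole argument over the $b$ blocks supplies the $1/\sqrt b$ factor.

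Concretely, I would begin by letting $\vx \in \bbR^k$ be a unit vector with $\|\mK\vx\|_2 = \sigma_{\min}(\mK)$, and partitioning it conformally with the block structure of $\mK$ as $\vx = (\vx_1^\tp, \ldots, \vx_b^\tp)^\tp$ with each $\vx_j \in \bbR^t$ (this is consistent with \cref{fact:Qj_shape}, which gives $\mQ_j \in \bbR^{k\times t}$ almost surely). Then
\begin{equation*}
\mK\vx = \sum_{i=1}^b \diag(\vg_i)\mV\, \vx_i.
\end{equation*}
By construction of $\mQ_j$, the columns of $\diag(\vg_i)\mV$ for $i\neq j$ lie in the orthogonal complement of $\range(\mQ_j)$, so $\mQ_j^\tp \diag(\vg_i)\mV = \mat{0}$ whenever $i \neq j$. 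Consequently,
\begin{equation*}
\mQ_j^\tp \mK \vx = \mQ_j^\tp \diag(\vg_j)\mV\, \vx_j.
\end{equation*}

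Next, since $\mQ_j$ has orthonormal columns, left-multiplication by $\mQ_j^\tp$ is non-expansive, so $\|\mK\vx\|_2 \ge \|\mQ_j^\tp \mK \vx\|_2 = \|\mQ_j^\tp \diag(\vg_j)\mV\, \vx_j\|_2 \ge \sigma_{\min}(\mQ_j^\tp \diag(\vg_j)\mV)\,\|\vx_j\|_2$ for every $j$. To turn this into a uniform lower bound, I apply pigeonhole: because $\sum_{j=1}^b \|\vx_j\|_2^2 = \|\vx\|_2^2 = 1$, there exists an index $j^\star$ with $\|\vx_{j^\star}\|_2 \ge 1/\sqrt b$. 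Specializing the inequality above to $j^\star$ gives
\begin{equation*}
\sigma_{\min}(\mK) \;=\; \|\mK\vx\|_2 \;\ge\; \frac{1}{\sqrt b}\,\sigma_{\min}\!\left(\mQ_{j^\star}^\tp \diag(\vg_{j^\star})\mV\right) \;\ge\; \frac{1}{\sqrt b}\, \min_{1\le j \le b} \sigma_{\min}\!\left(\mQ_j^\tp \diag(\vg_j)\mV\right),
\end{equation*}
which is the claim.

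There is no real obstacle here; the only subtlety is a measure-zero issue. The identity $\mQ_j^\tp \diag(\vg_i)\mV = \mat 0$ for $i\neq j$, and the fact that $\mQ_j$ has exactly $t$ columns, both rely on the event that the Krylov matrix is nonsingular and that the non-$j$ blocks span exactly a $(k-t)$-dimensional subspace. This holds with probability $1$ by \cref{thm:block_krylov_fullrank} and \cref{fact:Qj_shape}, so the bound holds almost surely; if $\mK$ is singular then $\sigma_{\min}(\mK) = 0$ and the inequality is trivial.
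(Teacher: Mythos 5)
Your proposal is correct and follows essentially the same argument as the paper: annihilate the off-diagonal blocks with $\mQ_j^\tp$, use that $\mQ_j$ has orthonormal columns, and apply pigeonhole over the $b$ blocks to extract the $1/\sqrt{b}$ factor. The only cosmetic differences are that you fix a minimizing unit vector at the outset rather than minimizing at the end, and you make the measure-zero caveat explicit, which the paper handles implicitly via \cref{thm:block_krylov_fullrank} and \cref{fact:Qj_shape}.
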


\begin{proof}
Let $\valpha = (\valpha_1,\ldots,\valpha_b) \in \bbR^k$ be a unit vector, chunked into pieces $\valpha_j$ of size $t$.
Then, since $\mQ_j$ has columns orthogonal to the columns of $\diag(\vec{g}_i)\mV$ for $i \ne j$,
\[
\norm{\mQ_j^\tp \mK\valpha}_2 
    = \norm[\bigg]{\mQ_j^\tp \sum_{i=1}^b \diag(\vg_i) \mV\valpha_i}_2 
    = \norm{\mQ_j^\tp \diag(\vg_j) \mV \valpha_j}_2 .
\]
Therefore, since $\mQ_j$ has orthonormal columns, we have
\begin{align*}
	\norm{\mK\valpha}_2 
    \ge \norm{\mQ_j^\tp \mK\valpha}_2 
    = \norm{\mQ_j^\tp \diag(\vg_j) \mV \valpha_j}_2
    \ge \sigma_{\rm min}\left(\mQ_j^\tp \diag(\vg_j) \mV\right) \norm{\valpha_j}_2.
\end{align*}
Next, since $\valpha$ is a unit vector, there exists some $j_\star$ for which $\norm{\valpha_{j_\star}} \ge 1/\sqrt{b}$. 
Thus, instantiating this bound for $j_\star$, we obtain
\begin{equation*}
    \norm{\mK\valpha}
    \ge
        \frac{1}{\sqrt{b}}
        \sigma_{\rm min}\left(
            \mQ_{j_\star}^\tp \diag(\vg_{j_\star}) \mV
        \right)
    \ge
        \frac{1}{\sqrt{b}}
        \min_{1 \le j \le b}
        \sigma_{\rm min}\left(
            \mQ_j^\tp \diag(\vg_j) \mV
        \right) .
\end{equation*}
Minimizing over unit vectors $\valpha$ on the left gives the desired result.
\end{proof}

\subsection{Anti-concentration and non-sparsification}\label{sec:anti-non-sparse}
In our goal to lower bound the minimum singular value of \mK, \cref{lem:PV_reduction} shows that it suffices the bound the minimum singular values \(\sigma_{\min}(\mQ_j^\tp \diag(\vg_j) \mV)\) for all \(j=1,\ldots,b\).
To bound the minimum singular value of each $\mQ_j^\tp \diag(\vg_j) \mV$, we use anti-concentration theorem of Nie \cite{Nie22}.
Given a Gaussian matrix \mZ, Nie's theorem states that if \(\vx^\tp\mZ\vy\) bounded away from zero with good probability for every pair of unit vectors $\vx$ and $\vy$, then the minimum singular value of \mZ is not too small with good probability as well.

\begin{importedtheorem}[Matrix anti-concentration, \protect{\cite[Thm.~1.6]{Nie22}}]
\label{impthm:nie-small-ball}
Let \( \mZ\in\bbR^{t \times t} \) be real random matrix with jointly Gaussian entries.
Suppose that for some \( \eta \geq 0 \), it holds that
\[
\Pr\left[ |\vx^\tp\mZ\vy| > \eta \right] \geq \frac{1}{2} \quad \text{for all unit vectors } \vx,\vy \in \bbR^t.
\]
Further suppose that \( \Pr[\norm{\mZ}_2 > M] \leq \tfrac{1}{8} \).
Then there exists a universal constant $\const > 0$ such that
\[
\Pr \left[ \sigma_{\rm min}(\mZ) \leq \const \cdot \frac{\delta^2 \eta^2} { t^3 M } \right] \leq  \delta.
\]
\end{importedtheorem}

We will use this theorem to bound the minimum singular value of \(\mQ_j^\tp \diag(\vg_j) \mV\).
Fix any unit vectors $\vx$ and $\vy$.
Conditional on $\mQ_j$, the bilinear form $\vx^\tp \mQ_j^\tp \diag(\vg_j) \mV \vy$ has a Gaussian distribution:
\begin{equation} \label{eq:bilinear-form-gaussian}
    \vx^\tp \mQ_j^\tp \diag(\vg_j) \mV \vy \mid \mQ_j
    \sim \cN\left(0, ~ \sum\nolimits_{i=1}^k \left[\mQ_j\vx\right]_i^2 \left[\mV\vy\right]_i^2  \right).
\end{equation}

The small-ball condition in Nie's result asks us to show that this Gaussian is bounded away from zero with high probability.
As \cref{lem:gaussian_anti_concentration} shows, a centered Gaussian is unlikely to be much smaller than its standard deviation.
As such, we just need to show that the variance in \cref{eq:bilinear-form-gaussian} is not too small.
To that end, we prove that for all vectors \vx and \vy, \(\mQ_j\vx\in\bbR^k\) has at least \(t+1\) entries bounded away from zero and \(\mV\vy\in\bbR^k\) has at least \(k-(t-1)\) entries bounded away from zero.
Since \((t+1)+(k-(t-1)) > k\), we know that there exists at least one index \(i_\star\) such that both \([\mQ_j\vx]_{i_\star}\) and \([\mV\vy]_{i_\star}\) are bounded away from zero.
Then the variance in \cref{eq:bilinear-form-gaussian} is at least \([\mQ_j\vx]_{i_\star}^2[\mV\vy]_{i_\star}^2\), which is also bounded away from zero.
\Cref{thm:krylov_sigmamin_bound} will then follow from Nie's \cref{impthm:nie-small-ball}.

We now execute this approach.
First, we state ``non-sparsification'' guarantees for \(\mV\) and \(\mQ_j\):

\begin{lemma}[Vandermonde non-sparsification]
\label{lem:V-not-sparse}
Let \(\vy\in\bbR^t\) be a vector.
Then \(\mV\vy\) has at least \(k-(t-1)\) entries of magnitude at least \(\norm{\vy}_1/t \cdot ( \mingap{k} / {6\cond{k}})^{2(t-1)}\).
\end{lemma}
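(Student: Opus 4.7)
The plan is to view each entry $[\mV\vy]_i = p(\lambda_i)$ as a polynomial value, where $p(x) := \sum_{j=0}^{t-1} y_{j+1} x^j$ has degree at most $t-1$. Such a polynomial has at most $t-1$ zeros, so immediately $[\mV\vy]_i \neq 0$ for at least $k-(t-1)$ indices; the task is to strengthen this qualitative observation into the quantitative lower bound $\tau := \norm{\vy}_\infty(\mingap{k}/(6\cond{k}))^{2(t-1)}$. I would do so by contradiction using Lagrange interpolation: assume too many entries are smaller than $\tau$, reconstruct $\vy$ from this small subset of values, and show the reconstruction forces $\norm{\vy}_\infty$ to be smaller than itself.

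Suppose, for contradiction, that there is a set $S \subseteq \{1,\ldots,k\}$ of size $t$ with $|p(\lambda_i)| < \tau$ for every $i \in S$. Because $p$ has degree at most $t-1$ and the nodes $\{\lambda_i\}_{i \in S}$ are distinct, Lagrange interpolation at these $t$ nodes reconstructs $p$ exactly:
\[
p(x) = \sum_{i \in S} p(\lambda_i)\, \ell_i(x), \qquad \ell_i(x) := \prod_{i' \in S,\, i' \ne i} \frac{x - \lambda_{i'}}{\lambda_i - \lambda_{i'}}.
\]
Matching the $x^{j-1}$ coefficients on each side writes each $y_j$ as a linear combination of $\{p(\lambda_i)\}_{i\in S}$, and the triangle inequality yields
\[
\norm{\vy}_\infty \le t \cdot \tau \cdot \max_{i \in S} \textstyle\sum_j |c_j^{(i)}|,
\]
where $c_j^{(i)}$ denotes the coefficient of $x^j$ in $\ell_i(x)$.

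The main technical step is bounding the coefficient-sum $\sum_j |c_j^{(i)}|$. The numerator $\prod_{i' \ne i}(x - \lambda_{i'})$ has all its roots in $[0,1]$ (using $\lambda_1 = 1$ from the context), so its coefficients alternate in sign and their absolute sum equals $\prod_{i' \ne i}(1 + \lambda_{i'}) \le 2^{t-1}$. The denominator $\prod_{i' \ne i}|\lambda_i - \lambda_{i'}|$ is at least $(\mingap{k}/\cond{k})^{t-1}$, since each of the $t-1$ pairwise differences is bounded below by the minimum additive gap, which is at least $\mingap{k}/\cond{k}$ by \Cref{lem:additive_gap_to_Delta} (applied with $\lambda_1 = 1$). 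Combining gives $\sum_j |c_j^{(i)}| \le (2\cond{k}/\mingap{k})^{t-1}$.

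Plugging back in leads to
\[
\norm{\vy}_\infty < t \cdot \tau \cdot (2\cond{k}/\mingap{k})^{t-1} = \norm{\vy}_\infty \cdot t \cdot \bigl(\mingap{k}/(18\cond{k})\bigr)^{t-1},
\]
which is at most $\norm{\vy}_\infty$ because $\mingap{k} \le 1 \le \cond{k}$ and $t \cdot 18^{-(t-1)} \le 1$ for every $t \ge 1$, delivering the contradiction. The only real obstacle is recognizing that the seemingly geometric question about $\mV\vy$ is a standard polynomial-interpolation question in disguise; after that, the key technical ingredient---the coefficient-sum bound on $\ell_i$---relies only on $\lambda_1 = 1$ (which places all nodes in $[0,1]$) and the minimum-gap bound from \Cref{lem:additive_gap_to_Delta}. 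The exponent $2(t-1)$ in the stated lemma is chosen generously so that the single factor $(\mingap{k}/(6\cond{k}))^{2(t-1)}$ comfortably absorbs both the $2^{t-1}$ overhead from the numerator coefficient sum and the $(\cond{k}/\mingap{k})^{t-1}$ overhead from the minimum-gap denominator.
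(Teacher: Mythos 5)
Your proof is correct, but it takes a genuinely different route from the paper's. The paper argues in two stages: it first applies Gautschi's bound (\Cref{impthm:gautschi}) to the leading \(t\times t\) Vandermonde block to guarantee that \emph{one} of the first \(t\) entries of \(\mV\vy\) has magnitude at least \((\eta/2)^{t-1}\) (with \(\eta\) the minimum additive gap), and then factors \(p_{\vy}\) over its zeros, placing disks of radius \(\eta/3\) around each zero, to show that every eigenvalue outside these at most \(t-1\) disks also yields a large value; multiplying the two losses is exactly what produces the exponent \(2(t-1)\). You instead argue by contradiction: if some size-\(t\) set \(S\) of entries were all below the threshold, Lagrange interpolation at the nodes \(\{\lambda_i\}_{i\in S}\) reconstructs the coefficient vector \(\vy\), and your coefficient-sum bound on the Lagrange basis polynomials---numerator sum \(\prod_{i'\neq i}(1+\lambda_{i'})\le 2^{t-1}\) since all nodes lie in \([0,1]\), denominator at least \((\mingap{k}/\cond{k})^{t-1}\) since any two distinct sorted eigenvalues differ by at least the minimum consecutive gap (\Cref{lem:additive_gap_to_Delta} with \(\lambda_1=1\))---forces \(\norm{\vy}_\infty<\norm{\vy}_\infty\). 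In effect you re-derive a Gautschi-type \(\ell_\infty\) bound for \emph{every} \(t\times t\) row submatrix of \(\mV\), not just the top block, which makes the disk/zero-factorization step unnecessary; your argument is self-contained (no citation needed) and actually yields a sharper bound, with the gap ratio appearing to the power \(t-1\) rather than \(2(t-1)\), so it comfortably implies the stated lemma. Two small loose ends worth a sentence in a polished write-up: dispose of \(\vy=\vec{0}\) (the claim is vacuous since the threshold is zero) so that \(\tau>0\), and note how the final strict inequality is secured---for \(t\ge 2\) the factor \(t\cdot(\mingap{k}/18\cond{k})^{t-1}\le 2/18<1\) already gives the contradiction without strictness, while for \(t=1\) the statement is immediate since \(\mV\vy=y_1\vec{1}\).
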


\begin{lemma}[Subspace non-sparisification]
\label{lem:Q-not-sparse}
Let \(\vx\in\bbR^t\) be a unit vector.
Then, with probability at least \(1-\delta \), \(\mQ_j\vx\) has at least \(t+1\) entries of magnitude at least \(({\delta\log^{-1/2}(2/\delta)}/{14 k^3})  (\mingap{k}/2\cond{k})^{t-1}\).
\end{lemma}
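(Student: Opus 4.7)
The plan is to argue by contradiction: if $\vv := \mQ_j\vx$ has at most $t$ entries of magnitude at least $\tau$, then this ``approximate sparsity'' can be translated into a structural constraint via the fact that $\vv \in \ker(\mK_{-j}^\tp)$, where $\mK_{-j}$ collects the blocks $\diag(\vg_i)\mV$ for $i \ne j$. Specifically, let $S \subseteq [k]$ with $|S|=t$ contain all ``large'' coordinates (padded arbitrarily if necessary), and set $T := [k]\setminus S$. Then $\norm{\vv|_T}_\infty < \tau$, so $\norm{\vv|_T}_2 < \tau\sqrt{k}$. Using $\mK_{-j}^\tp\vv = 0$ to write $\mK_{-j}^\tp(\vv|_S) = -\mK_{-j}^\tp(\vv|_T)$, one obtains
\[
\sigma_{\min}\!\left(\mK_{-j}^\tp|_S\right)\,\norm{\vv|_S}_2 \;\le\; \norm{\mK_{-j}}_2\,\tau\sqrt{k},
\]
where $\mK_{-j}^\tp|_S$ denotes the $(k-t)\times t$ submatrix obtained by keeping only the $t$ columns of $\mK_{-j}^\tp$ indexed by $S$. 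Provided $\tau \le 1/\sqrt{2k}$, the identity $\norm{\vv}_2 = 1$ forces $\norm{\vv|_S}_2 \ge 1/\sqrt{2}$, so rearranging yields $\tau \ge \sigma_{\min}(\mK_{-j}^\tp|_S)/(\sqrt{2k}\,\norm{\mK_{-j}}_2)$. Contrapositively, whenever $\tau$ is strictly below a high-probability lower bound on $\min_{|S|=t}\sigma_{\min}(\mK_{-j}^\tp|_S)/(\sqrt{2k}\,\norm{\mK_{-j}}_2)$, the vector $\mQ_j\vx$ must have at least $t+1$ entries of magnitude $\ge \tau$.

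To lower-bound $\sigma_{\min}(\mK_{-j}^\tp|_S)$, fix any $i^\star \ne j$ (assuming $b \ge 2$; the $b=1$ case of \Cref{thm:krylov_sigmamin_bound} collapses to the single block $\diag(\vg_1)\mV$ and is handled separately). The $t \times t$ row block of $\mK_{-j}^\tp|_S$ indexed by $i^\star$ is $(\mV|_S)^\tp\diag(\vg_{i^\star}|_S)$; dropping the remaining rows only shrinks $\sigma_{\min}$, and the inequality $\sigma_{\min}(AB) \ge \sigma_{\min}(A)\sigma_{\min}(B)$ for square matrices yields
\[
\sigma_{\min}\!\left(\mK_{-j}^\tp|_S\right) \;\ge\; \sigma_{\min}(\mV|_S)\cdot \min_{r \in S}|[\vg_{i^\star}]_r|.
\]
A direct inverse-Vandermonde estimate via Lagrange-interpolation coefficients (bounding $\norm{(\mV|_S)^{-1}}_1 \le 2^{t-1}/(\Delta^{\mathrm{add}}_k)^{t-1}$ and using $\norm{\cdot}_2 \le \sqrt{t}\norm{\cdot}_1$) together with \Cref{lem:additive_gap_to_Delta} gives, uniformly over $|S|=t$,
\[
\sigma_{\min}(\mV|_S) \;\ge\; \frac{1}{\sqrt{t}}\left(\frac{\mingap{k}}{2\cond{k}}\right)^{t-1}.
\]
The Gaussian factor is handled by \Cref{lem:gaussian_anti_concentration} and a union bound over the $k$ entries of $\vg_{i^\star}$, giving $\min_{r\in[k]}|[\vg_{i^\star}]_r| \ge \delta/(3k)$ with probability at least $1-\delta/3$.

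For the operator norm, we pass to the Frobenius norm: since $\norm{\mC}_2 = 1$, each block satisfies $\norm{\diag(\vg_i)\mV}_\fro^2 \le t\norm{\vg_i}_2^2$, so $\norm{\mK_{-j}}_\fro^2 \le t\sum_{i \ne j}\norm{\vg_i}_2^2$. The sum is chi-square with $k(b-1)$ degrees of freedom, and \Cref{lem:chisq_concentration} (with $\xi = \max(1,\log(3/\delta)/(k(b-1)))$) delivers $\norm{\mK_{-j}}_\fro = \order(k\sqrt{\log(1/\delta)})$ with probability $\ge 1-\delta/3$. Union-bounding the three events and substituting into the contrapositive inequality recovers the advertised threshold
\[
\tau \;=\; \Omega\!\left(\frac{\delta}{k^3\sqrt{\log(2/\delta)}}\left(\frac{\mingap{k}}{2\cond{k}}\right)^{t-1}\right),
\]
and the side condition $\tau \le 1/\sqrt{2k}$ is automatic since the right-hand side is $\order(1/k^3)$. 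The main subtlety is the Vandermonde bound: invoking \Cref{lem:V-not-sparse} naively to lower-bound $\sigma_{\min}(\mV|_S)$ would double the exponent to $2(t-1)$, so one must derive the sharper $(\mingap{k}/2\cond{k})^{t-1}$ rate directly via Lagrange coefficients in order to match the claim.
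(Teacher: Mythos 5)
Your proposal is correct and is essentially the paper's own argument in disguise: the large/small-entry contradiction you run against $\ker(\mK_{-j}^\tp)$ is exactly \Cref{lem:abstract-q-nonsparse} with the matrix of the remaining blocks serving as the non-sparsification certificate, and your uniform-over-$S$ estimate $\sigma_{\min}\big((\mV|_S)^\tp\diag(\vg_{i^\star}|_S)\big)\ge \tfrac{1}{\sqrt t}\,(\mingap{k}/2\cond{k})^{t-1}\min_r|[\vg_{i^\star}]_r|$ is the same injective-RIP bound the paper proves for the single-block certificate $\diag(\vg_i)\mV$ via \Cref{impthm:gautschi} and \Cref{lem:gaussian_anti_concentration} (\Cref{thm:V-non-sparse-cert}). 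The only deviations are cosmetic---you control the norm of all of $\mK_{-j}$ rather than a single block and re-derive the inverse-Vandermonde bound, so your constants land slightly off the stated $1/(14k^3)$ (e.g.\ $\log(3/\delta)$ in place of $\log(2/\delta)$)---which is immaterial, since only the $\order(t)$ exponent matters downstream.
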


\Cref{lem:V-not-sparse} is similar to \cite[Lem.~5.1]{PV21}, and \cref{lem:Q-not-sparse} is an improvement on previous results (see \cref{sec:Q-not-sparse} for discussion).
We will prove these lemmas in \cref{sec:V-not-sparse,sec:Q-not-sparse}, respectively.
For now, we use these results to prove \cref{thm:krylov_sigmamin_bound}:

\begin{proof}[Proof of \Cref{thm:krylov_sigmamin_bound}]
Our strategy is as follows.
First, we will bound $\sigma_{\rm min}(\mQ_j^\tp\diag(\vg_j) \mV)$ separately for each $j$ using Nie's anti-concentration theorem (\Cref{impthm:nie-small-ball}).
Then, we will union bound over all $j$ and invoke the Peng--Vempala bound (\Cref{lem:PV_reduction}) for $\sigma_{\rm min}(\mK)$.

Fix a value of $j$, define $\mZ_j \defeq \mQ_j^\tp\diag(\vg_j) \mV$, and let $\Pr_j$ indicate probabilities over only the randomness in \(\vg_j\) (i.e., conditioned on the values of $\{\vg_i : i \ne j\}$).
To invoke Nie's anti-concentration theorem (\Cref{impthm:nie-small-ball}), we need to identify appropriate parameters $M$ and $\eta$.

\vspace{0.4em}
\noindent \textbf{\textit{The spectral norm bound $M$.}}
The easier input to Nie's theorem is the parameter $M$, defined as a number for which $\Pr_j[\norm{\mZ_j}_2 > M] \le 1/8$.
By submultiplicativity, $\norm{\mZ_j}_2$ is bounded as
\begin{equation*}
    \norm{\mZ_j}_2 \le \norm{\mQ_j}_2 \norm{\diag(\vg_j)}_2 \norm{\mV}_2 \le \norm{\vg_j}_2\norm{\mV}_\fro \le k\norm{\vg_j}_2.
\end{equation*}
In the last inequality, we used the fact that $\mV$'s entries are all in $[0,1]$.
Therefore, by the norm tail bound \Cref{lem:chisq_concentration}, we obtain
\begin{equation} \label{eq:M-def}
    \Pr_j\left[\norm{\mZ_j}_2 > M\right] = \Pr_j\left[\norm{\vg_j}_2^2 > 25k\right] < \exp(-5) < \frac{1}{8} \quad \text{for } M = 5k^{3/2}.
\end{equation}

\vspace{0.4em}
\noindent \textbf{\textit{The small-ball parameter $\eta$.}}
Fix unit vectors $\vx$ and $\vy$, and a failure probability $\delta' > 0$ to be set later.
Our goal is to find a parameter $\eta>0$ for which $\Pr_j[|\vx^\tp \mZ_j\vy|] = \Pr_j[|(\mQ_j\vx)^\tp \diag(\vg_j)(\mV \vy)| > \eta] \ge 1/2$.
To do so, we invoke our non-sparsification results:
\begin{itemize}
    \item Since $\norm{\vy}_2 = 1$, it must hold that $\norm{\vy}_1 \ge 1$.
    Thus by \Cref{lem:V-not-sparse}, $\mV\vy$ has at least $k-t-1$ entries of magnitude at least \(\norm{\vy}_1/t \cdot (\mingap{k} / {6\cond{k}})^{2(t-1)} \ge k^{-1} (\mingap{k} / {6\cond{k}})^{2(t-1)}\).
    \item By \Cref{lem:Q-not-sparse}, with probability at least $1-\delta'$, $\mQ_j\vx$ has at least $t+1$ entries of magnitude at least $(\delta'\log^{-1/2}(2/\delta')/{14 k^3}) (\mingap{k}/2\cond{k})^{t-1}$.
    We condition on this ``good event'' going forward.
\end{itemize}
The total number of ``large'' entries in $\mQ_j\vx$ and $\mV\vy$ is $(t+1) + (k-t+1) = k+2$.
Therefore, by the pigeonhole principle, there must exist at least one index \(1\le i_\star \le k\) such that
\begin{equation*}
    |[\mV\vy]_{i_\star}| \ge \frac{1}{k} \left(
    \frac{\mingap{k}}{{6\cond{k}}}\right)^{2(t-1)} \quad \text{and} \quad |[\mQ_j\vx]_{i_\star}| \ge \frac{\delta'\log^{-1/2}(2/\delta')}{14 k^3}  \left(
    \frac{\mingap{k}}{2\cond{k}}\right)^{t-1}.
\end{equation*}
Thus, by \Cref{eq:bilinear-form-gaussian}, the conditional distribution of $\vx^\tp \mZ_j \vy$ with respect to $\mQ_j$ is a Gaussian random variable with mean and zero and standard deviation
\begin{equation} \label{eq:eta-def}
    \sqrt{\Var(\vx^\tp \mZ_j \vy \mid \mQ_j)} \ge \frac{\delta'\log^{-1/2}(2/\delta')}{14 k^4}  \left(\frac{\mingap{k}}{6\cond{k}}\right)^{3(t-1)}.
\end{equation}
Ergo, by Gaussian anti-concentration (\Cref{lem:gaussian_anti_concentration}),
\begin{equation*}
    \Pr_j\left[|\vx^\tp \mZ_j \vy| > \eta \right] \ge \frac{1}{2} \quad \text{for } \eta \coloneqq \frac{\delta'\log^{-1/2}(2/\delta')}{28 k^4}  \left(\frac{\mingap{k}}{6\cond{k}}\right)^{3(t-1)}.
\end{equation*}

\vspace{0.4em}
\noindent \textbf{\textit{Finishing up: Invoking Nie's theorem and union bounding.}}
By Nie's anti-concentration theorem (\Cref{impthm:nie-small-ball}), it holds that
\begin{equation*}
    \Pr_j\left[\sigma_{\rm min}(\mZ_j) \le \const \frac{(\delta')^2\eta^2}{t^3M}\right] \le \delta',
\end{equation*}
where $M$ and $\eta$ are defined in \Cref{eq:M-def,eq:eta-def}.
To complete the proof of the theorem, we need to union bound over this bound holding and the conclusion of \Cref{lem:Q-not-sparse} holding for each $j = 1,\ldots,b$.
Thus, setting $\delta' \defeq {\delta}/{2b}$ ensures all necessary ``good'' events hold.
Therefore, by \Cref{lem:PV_reduction},
\begin{equation*}
    \sigma_{\rm min}(\mK) \ge \frac{1}{\sqrt{b}} \min_{1\le j \le b} \sigma_{\rm min}(\mZ_j) > \const \frac{(\delta')^2\eta^2}{\sqrt{b}t^3M} 
\end{equation*}
holds with probability at least $1-\delta$.
Substituting \cref{eq:M-def,eq:eta-def}, simplifying, and using the crude bounds $b,t \le bt = k$ yields the stated result.
\end{proof}

\subsection{Vandermonde non-sparsification}
\label{sec:V-not-sparse}

We now prove \Cref{lem:V-not-sparse}, which shows that the Vandermonde matrix \mV is a non-sparsifier.
Our proof relies on the following result:

\begin{importedtheorem}[Vandermonde matrices are not too ill-conditioned \protect{\cite[Thm.~1]{Gau62}}]
\label{impthm:gautschi}
Let \(\mW\in\bbR^{t \times t}\) be a square Vandermonde matrix associated with nodes \(1\ge \mu_1 \geq \ldots \geq \mu_t \geq 0\).
Let \(\eta \defeq \min_{i=1,\ldots,t-1} \mu_i - \mu_{i+1}\) be the minimum additive gap between nodes.
\[
    \norm{\mW^{-1}}_1
    \defeq \max_{\vbeta\neq \vec{0}} \frac{\norm{\mW^{-1}\vbeta}_1}{\norm{\vbeta}_1}
    \leq \max_{i=1,\ldots, t} \, \prod_{\substack{j=1 \\ j\neq i}}^t ~ \frac{1+\mu_j}{|\mu_i-\mu_j|}
    \leq \left(\frac{2}{\eta}\right)^{t-1}.
\]
\end{importedtheorem}
We can then prove the associated non-sparsification lemma:

\begin{proof}[Proof of \Cref{lem:V-not-sparse}]
By homogeneity, we may assume $\norm{\vy}_1 = 1$.
Let \(p_{\vy}(x) = \sum_{i=1}^t [\vy]_i x^{i-1}\) be the polynomial with coefficients described by \vy.
In particular, notice that \([\mV\vy]_i = p_{\vy}(\lambda_i)\).
Let \(\mW\in\bbR^{t \times t}\) be the Vandermonde matrix formed by nodes \(\lambda_1, \lambda_{2}, \cdots, \lambda_{t}\). 
Observe that $\mW$ consists of the first $t$ rows of $\mV$, and define $\eta \coloneqq \max_{i=1,\ldots,k-1} \lambda_i - \lambda_{i+1}$, which is at least $\mingap{k} / \cond{k}$ by \Cref{lem:additive_gap_to_Delta}.

By \Cref{impthm:gautschi}, for any vector $\vbeta$ it holds that \(\norm{\vbeta}_1 \geq ({\eta}/{2})^{t-1}\norm{\mW^{-1}\vbeta}_1\).
Then, since \mW is a subset of the rows of \( \mV\), we have that
\[
    \norm{\mV\vy}_1
    \geq \norm{\mW\vy}_1
    \geq \left(\frac{\eta}{2}\right)^{t-1} \norm{\mW^{-1}\mW\vy}_1
    = \left(\frac{\eta}{2}\right)^{t-1}.
\]
That is, we are guaranteed that the absolute sum of the entries in \(\mV\vy\) is at least \(({\eta}/{2})^{t-1}\).
Consequently, there exists $i_\star\in\{1,\ldots, t\}$ such that 
\[
|p_{\vy}(\lambda_{i_\star})| \geq \frac{1}{t}\left(\frac{\eta}{2}\right)^{t-1}.
\]
Next, we expand \( p_{\vy}(x) \) in terms of its \(T \leq t-1\) zeros \(z_1,\cdots,z_T\):
\[
    p_{\vy}(x)
    = p_{\vy}(\lambda_{i_\star}) \prod_{j=1}^T \frac{x-z_j}{\lambda_{i_\star}-z_j}.
\]
Around each zero \(z_j\), introduce a disk \(D_{j}\) of radius \( r \defeq \eta/3 \).
Since the diameter of $D_j$ is $2\eta/3 < \eta$ and all eigenvalues are at least a distance $\eta$ apart, at most 1 eigenvalue lives in $D_j$.
Since there are at most \(t-1\) disks \(D_j\), there are at least \(k-(t-1)\) eigenvalues \(\lambda_\ell\) not in any disk.
For such \(\lambda_\ell\), 
\[
    \frac{|\lambda_\ell - z_j|}{|\lambda_{i_\star} - z_j|}
    \geq\frac{|\lambda_\ell - z_j|}{|\lambda_{i_\star} - \lambda_\ell| + |\lambda_\ell - z_j|}
    \geq \frac{|\lambda_\ell - z_j|}{1 + |\lambda_\ell - z_j|}
    \geq \frac{r}{1 + r}
    \geq \frac{r}{2}
    = \frac{\eta}{6},
\]
where we have used the triangle inequality, the fact that $|\lambda_{i_\star} - \lambda_\ell| \leq 1$ since all eigenvalues are $\leq 1$, the fact $x \mapsto \frac{x}{1+x}$ is an increasing function, and the fact that $r<1$.
Therefore,
\[
    |p_{\vy}(\lambda_\ell)|
    = |p_{\vy}(\lambda_{i_\star})| \prod_{j=1}^T \left|\frac{\lambda_\ell - z_j}{\lambda_{i_\star} - z_j}\right|
    \geq \frac{1}{t} \left(\frac{\eta}{2}\right)^{t-1} \prod_{j=1}^T \left(\frac{\eta}{6}\right)
    \geq \frac{1}{t}\left(\frac{\eta}{6}\right)^{2(t-1)}
    \geq 
    \frac{1}{t}\left(\frac{\mingap{k}}{6\cond{k}}\right)^{2(t-1)}.
\]
This completes the proof.
\end{proof}

\subsection{Subspace non-sparsification}
\label{sec:Q-not-sparse}

In this section, we will prove \Cref{lem:Q-not-sparse}. 
It will be useful to make the following definition.
\begin{definition}[Non-sparsification certificate]
    A matrix \(\mGamma \in \bbR^{k \times m}\) is a \emph{\((s,\eta,M)\)-non-sparsification certificate} for \(\mQ\) if it satisfies the following three properties:\begin{enumerate}
        \item
            \label[property]{prop:ortho}
            \textbf{Orthogonality:~}
            \(\mGamma^\tp\mQ=\mat0\).
        \item
            \label[property]{prop:norm}
            \textbf{Norm:~}
            \(\norm{\mGamma}_2 \leq M\).
        \item
            \label[property]{prop:rip}
            \textbf{Injective RIP:~}
            \(\norm{\mGamma^\tp\vu}_2 \geq \eta\norm\vu_2\) for all \(s\)-sparse vectors \(\vu\).
    \end{enumerate}
\end{definition}

We now show that any non-sparsification certificate implies that \(\mQ\vx\) has many large entries.

\begin{lemma}[Non-sparsification certificates certify non-sparsification]
    \label{lem:abstract-q-nonsparse}
    Suppose there exists a matrix \mGamma that is a \((s,\eta,M)\)-nonsparsification certificate for \(\mQ\).
    Let \(\vx\in\bbR^t\) be a unit vector.
    Then \(\mQ\vx\) has at least \(s+1\) entries of magnitude at least \(\eta/(3M\sqrt{k})\).
\end{lemma}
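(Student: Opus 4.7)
The plan is to argue by contradiction using the standard ``split-into-large-and-small'' trick that the non-sparsification certificate properties are designed for. Let $\vy \coloneqq \mQ\vx$, and note that $\norm{\vy}_2 = \norm{\vx}_2 = 1$ since $\mQ$ has orthonormal columns. Suppose for contradiction that $\vy$ has \emph{at most} $s$ entries of magnitude at least $\tau \coloneqq \eta/(3M\sqrt{k})$.

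The key move is to decompose $\vy = \vy_L + \vy_S$, where $\vy_L$ is supported on the at-most-$s$ ``large'' coordinates (so $\vy_L$ is $s$-sparse) and $\vy_S$ contains the remaining coordinates, each of magnitude strictly less than $\tau$. Using only that $\vy_S \in \bbR^k$, this gives the crude bound $\norm{\vy_S}_2 \le \sqrt{k}\,\tau = \eta/(3M)$. Now exploit the three properties of the certificate in turn:
\begin{enumerate}
    \item \Cref{prop:ortho} gives $\mGamma^\tp \vy = \mGamma^\tp\mQ\vx = \vec{0}$, hence $\mGamma^\tp \vy_L = -\mGamma^\tp \vy_S$.
    \item \Cref{prop:rip} applied to the $s$-sparse vector $\vy_L$ yields $\norm{\mGamma^\tp \vy_L}_2 \ge \eta \norm{\vy_L}_2$.
    \item \Cref{prop:norm} gives $\norm{\mGamma^\tp \vy_S}_2 \le M \norm{\vy_S}_2 \le M\cdot \eta/(3M) = \eta/3$.
\end{enumerate}
Chaining these bounds via the identity in step~(1) produces $\eta\norm{\vy_L}_2 \le \eta/3$, i.e., $\norm{\vy_L}_2 \le 1/3$.

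To close the argument, observe that \Cref{prop:rip,prop:norm} applied to a single $s$-sparse unit vector imply $\eta \le M$, so $\norm{\vy_S}_2 \le \eta/(3M) \le 1/3$ as well. Since $\vy_L$ and $\vy_S$ have disjoint supports, $\norm{\vy}_2^2 = \norm{\vy_L}_2^2 + \norm{\vy_S}_2^2 \le 1/9 + 1/9 < 1$, contradicting $\norm{\vy}_2 = 1$. The only step that required any thought was getting the constants to work out: choosing the threshold $\tau = \eta/(3M\sqrt{k})$ rather than something like $\eta/(M\sqrt{k})$ is exactly what is needed to make both $\norm{\vy_L}_2$ and $\norm{\vy_S}_2$ strictly smaller than $1/\sqrt{2}$, which is what produces the contradiction. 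No delicate random-matrix analysis is needed here; all the probability has been quarantined into the (separate) construction of $\mGamma$ in \Cref{sec:Q-not-sparse}.
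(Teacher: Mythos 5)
Your proof is correct and takes essentially the same route as the paper's: the same contradiction hypothesis, the same split of $\mQ\vx$ into an $s$-sparse ``large'' part and a ``small'' part below the threshold $\eta/(3M\sqrt{k})$, and the same combination of the orthogonality, injective RIP, and norm properties together with the observation $\eta \le M$. The only cosmetic difference is the final step: you contradict $\norm{\mQ\vx}_2 = 1$ via the disjoint-support Pythagorean identity, whereas the paper lower-bounds $\norm{\mGamma^\tp\mQ\vx}_2 > 0$ to contradict orthogonality directly (and it dispatches the degenerate case $\eta = 0$ up front, which your division by $\eta$ implicitly assumes away).
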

\begin{proof}
    The result is trivial if $\eta = 0$, so assume $\eta > 0$.
    Let \(\vu \defeq \mQ\vx \in \bbR^k\).
    Assume, for sake of contradiction, that \(\vu\) has at most \(s\) entries with magnitude greater than or equal to \(\gamma \defeq \eta/(3M\sqrt{k}) \).
    Decompose \(\vu = \vu_1 + \vu_2\) where \(\vu_1\) restricts \vu to its \(s\) largest magnitude entries and is zero elsewhere.
    By construction, every entry of \(\vu_2\) is smaller than \(\gamma\) in magnitude, so
    \[
        \norm{\vu_2}_2 \leq \sqrt k \gamma \leq \frac{\eta}{3M}.
    \]
    Since \( \mQ\) has orthonormal columns, \(\norm{\vu}_2 = \norm{\vx}_2 = 1\).
    Then, by the triangle inequality
    \[
    \norm{\vu_1}_2 \geq \norm{\vu}_2 - \norm{\vu_2}_2 \geq 1 - \norm{\vu_2}_2 \geq 1 - \frac{\eta}{3M}.
    \]
    Then, by the injective RIP and norm properties, we know that
    \begin{equation*}
        \norm{\mGamma^\tp\vu_1}_2 \geq \eta \norm{\vu_1}_2 \geq \eta\left(1-\frac{\eta}{3M}\right) 
        \quad\text{and}\quad
        \norm{\mGamma^\tp\vu_2}_2
        \leq \norm{\mGamma}_2\norm{\vu_2}_2 \leq M \norm{\vu_2}_2 \leq \frac\eta3.
    \end{equation*}
    Therefore, by reverse triangle inequality, 
    \[
        \norm{\mGamma^\tp\mQ\vx} = \norm{\mGamma^\tp\vu}
        \geq \norm{\mGamma^\tp\vu_1} - \norm{\mGamma^\tp\vu_2}
        \geq \eta\left(1-\frac{\eta}{3M}\right) - \frac\eta3
        = \eta \left(\frac{2}{3} - \frac{\eta}{3M}\right)
        > 0.
    \]
    In the last line, we use the fact that $\eta \le \norm{\mGamma}_2 \le M$.
    We have derived a contradiction with the orthogonality property, which asserts $\mGamma^\tp\mQ = \vec{0}$.
    We conclude that \(\vu=\mQ\vx\) indeed has more than \(s\) entries of magnitude at least \(\gamma\).
\end{proof}

To prove \Cref{lem:Q-not-sparse}, it suffices to find an appropriate non-sparsification certificate for $\mQ_j$.
Any submatrix of
\begin{equation*}
    \begin{bmatrix}
          \diag(\vg_1) \mV
          & \cdots &
          \diag(\vg_{j-1}) \mV
          &
          \diag(\vg_{j+1}) \mV
          & \cdots &
          \diag(\vg_b) \mV
       \end{bmatrix},
\end{equation*}
yields \emph{a} non-sparsification certificate for $\mQ_j$, since $\mQ_j$ is orthogonal to the range of this matrix by construction.
But proving \Cref{lem:Q-not-sparse} requires identifying the ``right'' non-sparsification certificate.
Though not explicitly written as such, the previous works of Peng and Vempala \cite{PV21} and Nie \cite{Nie22} can be interpreted as choosing $[\begin{matrix}
    \vg_1 & \cdots & \vg_{j-1} & \vg_{j+1} & \cdots & \vg_b
\end{matrix}]$ as the non-sparsification certificate.
For this matrix, the injective RIP property holds only with sparsity \(s= \order(b / \log k)\).
Since this only yields useful results on the complexity of RBKI when \(s \geq t\), this choice of non-sparsification certificate is informative only when \(b = \Omega(\sqrt{k \log k})\).
This square-root dependence of the block-size on the subspace dimension $k$ is visible in Nie's \cite[Thm.~1.10]{Nie22}.
To prove \Cref{lem:Q-not-sparse} with no restriction on block size, we make the other natural choice for the non-sparsification certificate, the \emph{single-vector Krylov matrix} $\diag(\vg_i) \mV$ associated with any index $i\ne j$.

\begin{lemma}[Single-vector Krylov matrix as non-sparsification certificate]\label{thm:V-non-sparse-cert}
Fix $i \ne j$. 
With probability at least $1-\delta$,
\[
    \mGamma \defeq \diag(\vg_i) \mV
\]
is a $(s,\eta,M)$-non-sparsification certificate for \(\mQ_j\) with 
\[ 
s=t
, \quad 
\eta= \frac{\delta}{2k^{3/2}}  \left(\frac{\mingap{k}}{2\cond{k}}\right)^{t-1}
,\quad  \text{and}\quad
M = \sqrt{5} k^{3/2} \sqrt{\log(2/\delta)}.
\]
\end{lemma}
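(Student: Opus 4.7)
The plan is to verify each of the three certificate properties for the choice $\mGamma = \diag(\vg_i)\mV$. The key conceptual point is that taking one of the other blocks as the certificate (rather than the simpler Gaussian submatrix $[\vg_1 \cdots \vg_{j-1} \vg_{j+1} \cdots \vg_b]$ used implicitly by Peng--Vempala and Nie) is precisely what permits the injective RIP to hold at the full sparsity $s=t$ for every block size $b$, avoiding the $b \geq \tilde\Omega(\sqrt{k})$ restriction in prior work.

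Orthogonality $\mGamma^\tp \mQ_j = \vec{0}$ is immediate from the construction of $\mQ_j$: for $i \ne j$, the columns of $\diag(\vg_i)\mV$ lie in the range that $\mQ_j$ is orthogonal to. The norm bound follows from submultiplicativity, $\|\mGamma\|_2 \le \|\vg_i\|_2 \|\mV\|_F$, with $\|\mV\|_F \le k$ deterministic (since $\mV \in [0,1]^{k\times t}$ and $t \le k$) and $\|\vg_i\|_2 \le \sqrt{5k\log(2/\delta)}$ holding with probability $\ge 1-\delta/2$ via \Cref{lem:chisq_concentration} applied with $\xi = \max\{1,\log(2/\delta)/k\}$. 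This gives $\|\mGamma\|_2 \le M$.

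The main work, and the main obstacle, is the injective RIP. For any $t$-sparse vector $\vu$, I would pick a set $S \subseteq \{1,\ldots,k\}$ of size $t$ containing $\supp(\vu)$ and write
\[
\mGamma^\tp \vu = (\mV_S)^\tp \diag(\vg_i|_S)\,\vu|_S,
\]
where $\mV_S \in \bbR^{t\times t}$ is the \emph{square} Vandermonde submatrix with nodes $\{\lambda_s : s \in S\} \subseteq [0,1]$. It follows that $\|\mGamma^\tp\vu\|_2 \ge \sigma_{\min}(\mV_S)\cdot \min_{s\in S}|\vg_i[s]|\cdot \|\vu\|_2$, and I bound each factor \emph{uniformly} in $S$. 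For the Vandermonde factor, Gautschi's \Cref{impthm:gautschi} together with $\|\cdot\|_2 \le \sqrt{t}\|\cdot\|_\infty$ yields $\sigma_{\min}(\mV_S) \ge (\eta_S/2)^{t-1}/\sqrt{t}$, where $\eta_S$ is the minimum additive node gap in $S$; because the $\lambda_i$ are strictly decreasing, $\eta_S$ is bounded below by the minimum consecutive gap, which by \Cref{lem:additive_gap_to_Delta} (and $\lambda_1=1$) is at least $\mingap{k}/\cond{k}$. The resulting lower bound on $\sigma_{\min}(\mV_S)$ is thus \emph{deterministic and uniform} in $S$---the crucial enabler of the argument, and the reason row-submatrices of $\mV$ (themselves Vandermonde!) were the right certificate to pick. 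For the Gaussian factor, I apply \Cref{lem:gaussian_anti_concentration} to each coordinate of $\vg_i$ and union bound over $k$ entries to get $\min_s |\vg_i[s]| \ge \delta/(2k)$ with probability $\ge 1-\delta/2$. Multiplying the two bounds and using $k\sqrt{t} \le k^{3/2}$ produces $\|\mGamma^\tp \vu\|_2 \ge \eta \|\vu\|_2$ with the claimed $\eta$. A final union bound over the norm event and the anti-concentration event yields total failure probability $\delta$.
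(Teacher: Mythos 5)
Your proposal is correct and follows essentially the same route as the paper: orthogonality by construction, the norm bound via $\|\mGamma\|_2\le\|\vg_i\|_2\|\mV\|_\fro\le k\|\vg_i\|_2$ with \Cref{lem:chisq_concentration}, and the injective RIP by restricting to a size-$t$ index set containing the support, invoking Gautschi's bound (\Cref{impthm:gautschi}) on the resulting square Vandermonde submatrix together with coordinate-wise Gaussian anti-concentration and a union bound over the $k$ entries of $\vg_i$. The only differences are cosmetic: you factor the bound as $\sigma_{\min}(\mV_S)\cdot\min_{s}|\vg_i[s]|$ with a $\sqrt{t}$ norm-conversion factor, while the paper bounds $\|\vu\odot\vg_i\|_2$ first and uses a cruder $\sqrt{k}$ factor; both reduce to the stated $\eta$ since $t\le k$.
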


\noindent Combining this result with \Cref{lem:abstract-q-nonsparse} immediately proves \Cref{lem:Q-not-sparse}.
We now prove \Cref{thm:V-non-sparse-cert}:

\begin{proof}
We must verify all three properties of a non-sparsification certificate.
The orthogonality property holds by construction, and we can obtain the norm property by applying the crude bound $\norm{\mGamma}_2 \leq \norm{\vg_j}_2\norm{\mV}_\fro \leq k\norm{\vg_j}_2$ and using \Cref{lem:chisq_concentration}:
\[
\Pr\left[\norm{\mGamma}_2 \geq \sqrt{5} k^{3/2} \sqrt{\log(2/\delta)} \right]
\leq \Pr\left[ \norm{\vg_j}^2 \geq 5k \log(2/\delta) \right]
\leq \delta/2.
\]

We now prove the injective RIP property.
For notational convenience, write $\vg = \vg_i$.
We must show that \(\norm{\mGamma^\tp\vu}_2\geq\eta\) holds for all \(t\)-sparse unit vectors \vu.
Fix a $t$-sparse unit vector $\vu$ and define $\vz \defeq \vu \odot \vg$, where ``\(\odot\)'' denotes the Hadamard, or entrywise, product between vectors.
Observe that
\[
    \norm{\mGamma^\tp\vu}_2
    = \norm{\mV^\tp\diag(\vg)\vu}_2
    = \norm{\mV^\tp(\vg \odot \vu)}
    = \norm{\mV^\tp \vz}_2.
\]
By Gaussian anti-concentration (\Cref{lem:gaussian_anti_concentration}) and a union bound,
\[
\Pr\left[\min_{1\le\ell \le k}|[\vg]_\ell| < \frac{\delta}{2k} \right]
\leq 
k \Pr\left[ |[\vg]_\ell| < \frac{\delta}{2k}\right]
\leq \frac{\delta}{2}.
\]
For the remainder of this proof, condition on the event that this ``bad event'' does not happen.
Conditioned as such, we have,
\[
\norm{\vz}_2 
= \norm{\vu \odot \vg}_2 
\geq \frac{\delta}{2k}\norm{\vu}_2
= \frac{\delta}{2k}.
\]
Let $i_1,\ldots,i_t$ denote a set of indices on which $\vz$ is supported, and let $\vy \in \bbR^t$ denote the restriction of $\vz \in \bbR^k$ to these entries.
Similarly, let $\mW$ denote the row submatrix of $\mV$ indexed by $i_1,\ldots,i_t$.
By construction,\mW is square and Vandermonde, $\mV^\tp \vx = \mW^\tp \vy$, and $\norm{\vz}_2 = \norm{\vy}_2 \ge \delta/2k$.

By \Cref{impthm:gautschi} (and \Cref{lem:additive_gap_to_Delta}), the $\ell_1$ operator-norm of $\mW^{-1}$ is bounded:
\[
    \norm{\mW^{-1}}_\infty \leq \left(\frac{2}{\min_{i=1,\ldots,k-1} \lambda_i - \lambda_{i+1}}\right)^{t-1}
    \leq \left(\frac{2\cond{k}}{\mingap{k}}\right)^{t-1}.
\]
The $\ell_2$ and $\ell_1$ operator norms for linear maps $\bbR^{k}\to\bbR^k$ are comparable, $\norm{\,\cdot\,}_2 \le \sqrt{t}\norm{\,\cdot\,}_1$.
Thus,
\begin{equation*}
    \norm{\mGamma^\tp \vu}_2 = \norm{\mV^\tp \vz}_2 = \norm{\mW^\tp\vy}_2 \ge \norm{\mW^{-1}}_2^{-1}\norm{\vy}_2 \ge \frac{1}{\sqrt{k}} \left(\frac{\mingap{k}}{2\cond{k}}\right)^{t-1}\cdot \frac{\delta}{2k}.
\end{equation*}
This is the advertised result.
\end{proof}

\section{Conclusion and open problems} \label{sec:conclusion}

In this paper, we showed that randomized block Krylov iteration solve low-rank approximation problems with any block size \( 1\leq b\leq k\) using just \(\orderish(k/\sqrt{\eps})\) matrix-vector products, improving on prior bounds which required \(\orderish(k^2/\sqrt\eps)\) in the worst case.
Our results provide theoretical support for the widespread use of intermediate block sizes $1 < b < k$ in applied computation.

There are several possibilities for future work.
First, the dependence of our results on the singular value gaps is more pessimistic than those of \cite{MMM24}, and it is natural to seek a ``best of both worlds'' result that has our dependence on the block size $b$ and \cite{MMM24}'s dependence on the singular value gaps.
We conjecture that our main results \cref{thm:LRA,thm:LRA_gap,thm:krylov_sigmamin_bound} hold with the gap $\mingap{k} = \min_{i = 1,\ldots,k-1} (\sigma_i^2 - \sigma_{i+1}^2)/\sigma_i^2$ replaced with the $b$-th order gap $\smash{\Delta_k^{(b)}} \defeq \min_{i = 1,\ldots,k-b} (\sigma_i^2 - \sigma_{i+b}^2)/\sigma_i^2$ and with the condition number $\cond{k}$ removed entirely.
We regard this as an interesting mathematical question but one of limited practical importance, as the singular value gap $\mingap{k}$ and condition number $\cond{k}$ appear in a logarithm and are easily controlled by randomly perturbing the input (\cref{sec:smoothed-analysis}).
Second, we believe there are many interesting questions regarding the behavior of block Krylov subspace methods in finite-precision arithmetic.
Given the empirical success of RBKI in finite precision, we are optimistic that our results could transfer to that setting with enough care.
Third, we are interested in whether our results and techniques can be extended to the setting of sparse Gaussian starting blocks as in the works of Peng, Vempala, and Nie \cite{Nie22,PV21}.

\section*{Acknowledgments}

Christopher Musco was partially supported by NSF Awards \#2045590 and \#2427362.
Raphael Meyer and Ethan Epperly received support, under aegis of Joel Tropp, from Office of Naval Research BRC award N00014-24-1-2223, a Caltech Center for Sensing to Intelligence grant, and the Caltech Carver Mead New Adventures Fund.
Ethan Epperly was also supported by the DOE CSGF under award DE-SC0021110.
We thank Cameron Musco, Jerry Li, Joel Tropp, and Robert Webber for helpful discussions about this project. 

\bibliographystyle{halpha}
\bibliography{refs}

\newcommand{\etalchar}[1]{$^{#1}$}
\begin{thebibliography}{MDM{\etalchar{+}}23}
\expandafter\ifx\csname url\endcsname\relax
  \def\url#1{\texttt{#1}}\fi
\expandafter\ifx\csname doi\endcsname\relax
  \def\doi#1{\burlalt{doi:#1}{http://dx.doi.org/#1}}\fi
\expandafter\ifx\csname urlprefix\endcsname\relax\def\urlprefix{}\fi
\expandafter\ifx\csname href\endcsname\relax
  \def\href#1#2{#2}\fi
\expandafter\ifx\csname burlalt\endcsname\relax
  \def\burlalt#1#2{\href{#2}{#1}}\fi

\bibitem[BCW22]{BCW22a}
Ainesh Bakshi, Kenneth~L. Clarkson, and David~P. Woodruff.
\newblock Low-rank approximation with $1/\epsilon^{1/3}$ matrix-vector
  products.
\newblock In {\em Proceedings of the 54th {{Annual ACM SIGACT Symposium}} on
  {{Theory}} of {{Computing}}}, pages 1130--1143, 2022.
\newblock \doi{10.1145/3519935.3519988}.

\bibitem[BG25]{BG25}
Oleg Balabanov and Laura Grigori.
\newblock Randomized block {Gram--Schmidt} process for the solution of linear
  systems and eigenvalue problems.
\newblock {\em SIAM Journal on Scientific Computing}, 47(1):A553–A585, 2025.
\newblock \doi{10.1137/22m1518396}.

\bibitem[BN23]{BN23a}
Ainesh Bakshi and Shyam Narayanan.
\newblock Krylov methods are (nearly) optimal for low-rank approximation.
\newblock In {\em Proceeding of the 64th {{Annual IEEE Symposium}} on
  {{Foundations}} of {{Computer Science}}}, pages 2093--2101, 2023.
\newblock \doi{10.1109/FOCS57990.2023.00128}.

\bibitem[CH23]{CH23}
Tyler Chen and Eric Hallman.
\newblock Krylov-aware stochastic trace estimation.
\newblock {\em SIAM Journal on Matrix Analysis and Applications},
  44(3):1218–1244, 2023.
\newblock \doi{10.1137/22m1494257}.

\bibitem[CHLZ25]{CHLZ25}
Tyler Chen, Caroline Huber, Ethan Lin, and Hajar Zaid.
\newblock Preconditioning without a preconditioner: Faster ridge-regression and
  {{Gaussian}} sampling with randomized block {{Krylov}} subspace methods.
\newblock {\em arXiv preprint
  \href{https://arxiv.org/abs/2501.18717v1}{arXiv:2501.18717v1}}, 2025.

\bibitem[CW09]{CW09}
Kenneth~L. Clarkson and David~P. Woodruff.
\newblock Numerical linear algebra in the streaming model.
\newblock In {\em Proceedings of the 41st Annual {{ACM}} Symposium on
  {{Theory}} of Computing}, pages 205--214, 2009.
\newblock \doi{10.1145/1536414.1536445}.

\bibitem[CW17]{CW17}
Kenneth~L. Clarkson and David~P. Woodruff.
\newblock Low-rank approximation and regression in input sparsity time.
\newblock {\em Journal of the ACM}, 63(6):1--45, 2017.
\newblock \doi{10.1145/3019134}.
\newblock Appeared in STOC `13.

\bibitem[DI19]{DI19}
Petros Drineas and Ilse C.~F. Ipsen.
\newblock Low-rank matrix approximations do not need a singular value gap.
\newblock {\em SIAM Journal on Matrix Analysis and Applications},
  40(1):299--319, 2019.
\newblock \doi{10.1137/18M1163658}.

\bibitem[DM16]{DM16}
Petros Drineas and Michael~W. Mahoney.
\newblock {{RandNLA}}: Randomized numerical linear algebra.
\newblock {\em Communications of the ACM}, 59(6):80--90, 2016.
\newblock \doi{10.1145/2842602}.

\bibitem[DS00]{D+2000}
Jack Dongarra and Francis Sullivan.
\newblock Guest editors introduction to the top 10 algorithms.
\newblock {\em Computing in Science \& Engineering}, 2(1):22–23, 2000.
\newblock \doi{10.1109/mcise.2000.814652}.

\bibitem[DTL{\etalchar{+}}17]{DTLKGYAHA17}
Jack Dongarra, Stanimire Tomov, Piotr Luszczek, Jakub Kurzak, Mark Gates,
  Ichitaro Yamazaki, Hartwig Anzt, Azzam Haidar, and Ahmad Abdelfattah.
\newblock With extreme computing, the rules have changed.
\newblock {\em Computing in Science \& Engineering}, 19(3):52–62, 2017.
\newblock \doi{10.1109/mcse.2017.48}.

\bibitem[EGG{\etalchar{+}}06]{EberlyGiesbrechtGiorgi:2006}
Wayne Eberly, Mark Giesbrecht, Pascal Giorgi, Arne Storjohann, and Gilles
  Villard.
\newblock Solving sparse rational linear systems.
\newblock In {\em Proceedings of the 2006 International Symposium on Symbolic
  and Algebraic Computation}, pages 63--70, 2006.
\newblock \doi{10.1145/1145768.1145785}.

\bibitem[Gau62]{Gau62}
Walter Gautschi.
\newblock On inverses of {{Vandermonde}} and confluent {{Vandermonde}}
  matrices.
\newblock {\em Numerische Mathematik}, 4(1):117--123, 1962.
\newblock \doi{10.1007/BF01386302}.

\bibitem[GM13]{GM13}
Alex Gittens and Michael Mahoney.
\newblock Revisiting the {{Nystr\"om}} method for improved large-scale machine
  learning.
\newblock In {\em International {{Conference}} on {{Machine Learning}}}, pages
  567--575, 2013.
\newblock \urlprefix\url{https://dl.acm.org/doi/abs/10.5555/2946645.3007070}.

\bibitem[GU77]{golub1977block}
Gene~H Golub and Richard Underwood.
\newblock The block lanczos method for computing eigenvalues.
\newblock In {\em Mathematical software}, pages 361--377. Elsevier, 1977.
\newblock \doi{10.1016/B978-0-12-587260-7.50018-2}.

\bibitem[Gu15]{Gu15}
Ming Gu.
\newblock Subspace iteration randomization and singular value problems.
\newblock {\em SIAM Journal on Scientific Computing}, 37(3):A1139--A1173, 2015.
\newblock \doi{10.1137/130938700}.

\bibitem[GVL13]{golub2013matrix}
Gene~H Golub and Charles~F Van~Loan.
\newblock {\em Matrix computations}.
\newblock John Hopkins University Press, 2013.
\newblock \doi{10.56021/9781421407944}.

\bibitem[HMT11]{HMT11}
Nathan Halko, Per-Gunnar Martinsson, and Joel~A. Tropp.
\newblock Finding structure with randomness: {{Probabilistic}} algorithms for
  constructing approximate matrix decompositions.
\newblock {\em SIAM Review}, 53(2):217--288, 2011.
\newblock \doi{10.1137/090771806}.

\bibitem[Kal95]{K95}
Erich Kaltofen.
\newblock Analysis of {Coppersmith's} block {Wiedemann} algorithm for the
  parallel solution of sparse linear systems.
\newblock {\em Mathematics of Computation}, 64(210):777--806, 1995.
\newblock \doi{10.1090/S0025-5718-1995-1270621-1}.
\newblock Appeared in AAECC `93.

\bibitem[KS24]{KS24}
Daniel Kressner and Nian Shao.
\newblock A randomized small-block {{Lanczos}} method for large-scale null
  space computations.
\newblock {\em arXiv preprint
  \href{https://arxiv.org/abs/2407.04634v1}{arXiv:2407.04634v1}}, 2024.

\bibitem[LM00]{LM00}
Béatrice Laurent and Pascal Massart.
\newblock Adaptive estimation of a quadratic functional by model selection.
\newblock {\em The Annals of Statistics}, 28(5), 2000.
\newblock \doi{10.1214/aos/1015957395}.

\bibitem[LS13]{LS13}
J{\"o}rg Liesen and Zdenek Strakos.
\newblock {\em Krylov Subspace Methods: Principles and Analysis}.
\newblock Numerical Mathematics and Scientific Computation. Oxford University
  Press, 2013.
\newblock \doi{10.1093/acprof:oso/9780199655410.001.0001}.

\bibitem[MDM{\etalchar{+}}23]{M+23}
Riley Murray, James Demmel, Michael~W. Mahoney, N.~Benjamin Erichson, Maksim
  Melnichenko, Osman~Asif Malik, Laura Grigori, Piotr Luszczek, Michał
  Dereziński, Miles~E. Lopes, Tianyu Liang, Hengrui Luo, and Jack Dongarra.
\newblock Randomized numerical linear algebra: A perspective on the field with
  an eye to software.
\newblock {\em arXiv preprint
  \href{https://arxiv.org/abs/2302.11474v2}{arXiv:2302.11474v2}}, 2023.

\bibitem[Min96]{Min96}
Nariyuki Minami.
\newblock Local fluctuation of the spectrum of a multidimensional {{Anderson}}
  tight binding model.
\newblock {\em Communications in Mathematical Physics}, 177(3):709--725, April
  1996.
\newblock \doi{10.1007/BF02099544}.

\bibitem[MM15]{MM15}
Cameron Musco and Christopher Musco.
\newblock Randomized block {{Krylov}} methods for stronger and faster
  approximate singular value decomposition.
\newblock In {\em Proceedings of the 28th {{International Conference}} on
  {{Neural Information Processing Systems}}}, pages 1396--1404, 2015.
\newblock \urlprefix\url{https://dl.acm.org/doi/10.5555/2969239.2969395}.

\bibitem[MMM24]{MMM24}
Raphael Meyer, Cameron Musco, and Christopher Musco.
\newblock On the {{Unreasonable Effectiveness}} of {{Single Vector Krylov
  Methods}} for {{Low-Rank Approximation}}.
\newblock In {\em Proceedings of the 2024 {{Annual ACM-SIAM Symposium}} on
  {{Discrete Algorithms}}}, pages 811--845. {SIAM}, 2024.
\newblock \doi{10.1137/1.9781611977912.32}.

\bibitem[Nie22]{Nie22}
Zipei Nie.
\newblock Matrix anti-concentration inequalities with applications.
\newblock In {\em Proceedings of the 54th {{Annual ACM SIGACT Symposium}} on
  {{Theory}} of {{Computing}}}, pages 568--581, 2022.
\newblock \doi{10.1145/3519935.3520060}.

\bibitem[NT24]{NT24}
Yuji Nakatsukasa and Joel~A. Tropp.
\newblock Fast and accurate randomized algorithms for linear systems and
  eigenvalue problems.
\newblock {\em SIAM Journal on Matrix Analysis and Applications},
  45(2):1183–1214, 2024.
\newblock \doi{10.1137/23m1565413}.

\bibitem[PV24]{PV21}
Richard Peng and Santosh~S. Vempala.
\newblock Solving sparse linear systems faster than matrix multiplication.
\newblock {\em Communications of the ACM}, 67(7):79--86, 2024.
\newblock \doi{10.1145/3615679}.
\newblock Appeared in SODA `21.

\bibitem[RST10]{RST10}
Vladimir Rokhlin, Arthur Szlam, and Mark Tygert.
\newblock A randomized algorithm for principal component analysis.
\newblock {\em SIAM Journal on Matrix Analysis and Applications},
  31(3):1100--1124, 2010.
\newblock \doi{10.1137/080736417}.

\bibitem[Saa80]{Saa80}
Yousef Saad.
\newblock On the rates of convergence of the {Lanczos} and the block-{Lanczos}
  methods.
\newblock {\em SIAM Journal on Numerical Analysis}, 17(5):687–706, 1980.
\newblock \doi{10.1137/0717059}.

\bibitem[Sar06]{Sar06}
Tam\'as Sarl\'os.
\newblock Improved approximation algorithms for large matrices via random
  projections.
\newblock In {\em 47th {{Annual IEE Symposium}} on {{Foundations}} of
  {{Computer Science}}}, pages 143--152, 2006.
\newblock \doi{10.1109/FOCS.2006.37}.

\bibitem[Sch80]{Sch80}
Jacob~T. Schwartz.
\newblock Fast probabilistic algorithms for verification of polynomial
  identities.
\newblock {\em Journal of the ACM}, 27(4):701--717, 1980.
\newblock \doi{10.1145/322217.322225}.
\newblock Appeared in ISSAC `79.

\bibitem[SEAR18]{SER18}
Max Simchowitz, Ahmed El~Alaoui, and Benjamin Recht.
\newblock Tight query complexity lower bounds for {PCA} via finite sample
  deformed {Wigner} law.
\newblock In {\em Proceedings of the 50th Annual ACM SIGACT Symposium on Theory
  of Computing}, pages 1249--1259, 2018.
\newblock \doi{10.1145/3188745.3188796}.

\bibitem[Sor97]{S97}
Danny~C. Sorensen.
\newblock Implicitly restarted {Arnoldi/Lanczos} methods for large scale
  eigenvalue calculations.
\newblock In {\em Parallel Numerical Algorithms}, page 119–165. Springer,
  1997.
\newblock \doi{10.1007/978-94-011-5412-3\_5}.

\bibitem[TW23]{TW23}
Joel~A. Tropp and Robert~J. Webber.
\newblock Randomized algorithms for low-rank matrix approximation: {{Design}},
  analysis, and applications.
\newblock {\em arXiv preprint
  \href{https://arxiv.org/abs/2306.12418v3}{arXiv:2306.12418v3}}, 2023.

\bibitem[Woo14]{Woo14}
David~P. Woodruff.
\newblock Sketching as a tool for numerical linear algebra.
\newblock {\em Foundations and Trends in Theoretical Computer Science},
  10(1-2):1--157, 2014.
\newblock \doi{10.1561/0400000060}.

\bibitem[YHBS25]{YHBS25}
Ichitaro Yamazaki, Andrew~J. Higgins, Erik~G. Boman, and Daniel~B. Szyld.
\newblock Random-sketching techniques to enhance the numerical stability of
  block orthogonalization algorithms for $s$-step {GMRES}.
\newblock {\em arXiv preprint
  \href{https://arxiv.org/abs/2503.16717v2}{arXiv:2503.16717v2}}, 2025.

\bibitem[Zip79]{Zip79}
Richard Zippel.
\newblock Probabilistic algorithms for sparse polynomials.
\newblock In {\em International Symposium on Symbolic and Algebraic
  Manipulation}, pages 216--226, 1979.
\newblock \doi{10.1007/3-540-09519-5\_73}.

\end{thebibliography}
\appendix

\section{Proof of \Cref{lem:kpLgood_klgood}}
\label{sec:proofs}

Suppose $\mB$ is $(k',L)$-good for $\mM$. 
Then there exists a matrix $\mQ'$ whose columns live in $\range(\mB)$ such that $\norm{(\mU_{k'}^\tp \mQ')^{-1}}_2 \leq L$.
This implies $\mU_{k'}^\tp \mQ'$ is invertible. 
Let $\mC\in\bbR^{k'\times k}$ be the first $k$ columns of $(\mU_{k'}^{\tp}\mQ')^{-1}$ so that 
\[
(\mU_{k'}^\tp \mQ') \mC = \bmat{\mI_k \\ \mat{0}}.
\]
Let \mV be an orthonormal basis for $\range(\mC)$ and define $\mQ = \mQ' \mV$. 
Note also that $\mU_k^\tp = [\begin{matrix}\mI_k & \mat{0}\end{matrix}] \mU_{k'}^\tp$.
Then, since $\mC = \mV (\mV^\tp\mC)$,
\[
\mU_k^\tp \mQ
= [\begin{matrix}\mI_k & \mat{0}\end{matrix}] \mU_{k'}^\tp \mQ' \mV
= [\begin{matrix}\mI_k & \mat{0}\end{matrix}] \mU_{k'}^\tp \mQ' \mC (\mV^\tp \mC)^{-1}
=(\mV^\tp \mC)^{-1}.
\]
Hence, 
\begin{align}
    \norm{(\mU_k^\tp\mQ)^{-1}}_2^2
= \norm{\mV^\tp\mC}_2^2
\leq \norm{\mC}_2^2
= \norm*{ (\mU_{k'}^\tp\mQ')^{-1}\bmat{\mI_k \\ \mat{0}}}_2^2
\leq \norm*{ (\mU_{k'}^\tp\mQ')^{-1}}_2^2
\leq L.
\end{align}

\section{Nonsingularity of random block Krylov subspaces}
\label{sec:SZ}

In this section we prove \Cref{thm:block_krylov_fullrank}.
We begin by recalling a simplified version of the Schwartz--Zippel lemma \cite{Zip79,Sch80}. 

\begin{importedtheorem}[Schwartz--Zippel]\label{thm:schwartz_zippel}
    Suppose $p: \bbR^q \to \bbR$ is a nonzero polynomial of finite total degree and $\vec{x}$ is a random Gaussian vector. 
    Then, with probability one, $p(\vec{x}) \neq 0$.
\end{importedtheorem}

\begin{proof}[Proof of \Cref{thm:block_krylov_fullrank}]
    We begin by relating the block Krylov subspace to a multivariate polynomial of the entries of the starting block as follows.
    Define the matrix $\mK_t(\mH)$ by 
    \begin{equation}
        \mK_t(\mH) = 
        \begin{bmatrix}
            \mH & \mC\mH & \cdots & \mC^{t-1}\mH
        \end{bmatrix}\in\bbR^{k\times bt}.
    \end{equation}
    Define the polynomial $p:\bbR^{d \times t} \to \bbR$ by
    \begin{equation*}
        p(\mH) = \det(\mK_t(\mH)^\tp \mK_t(\mH)).
    \end{equation*}
    Since $k = bt$, the matrix $\mK_t(\mH)$ is singular is if and only if $p(\mH)\neq 0$.
    Therefore, if we can show that $p:\bbR^{d \times k} \to \bbR$ is not the zero polynomial, then the result follows immediately from the Schwartz--Zippel Lemma (\Cref{thm:schwartz_zippel}).
    
    To show $p:\bbR^{d \times k} \to \bbR$ is not the zero polynomial, we must exhibit a matrix $\hat{\mH}$ such that $p(\hat{\mH}) \neq 0$.
    Work in an orthonormal basis of eigenvectors of $\mC$ so that 
    \begin{equation*}
        \mC = \diag(\mC_1,\ldots,\mC_b)
        \quad\text{where}\quad
        \mC_i \defeq \diag(\lambda_{(i-1)t+1},
        \lambda_{(i-1)t+2}, \ldots, \lambda_{(i-1)t + t}).
    \end{equation*}
    We have used $\diag$ also to indicate a \emph{block} diagonal matrix with the specified entries.
    Define 
    \begin{equation*}
        \hat{\mH} \defeq
        \diag(\vec{1},\ldots,\vec{1}).
    \end{equation*}
    Observe then that, for some permutation matrix $\mat{\Pi}$ 
    \begin{equation*}
        \mK_t(\hat{\mH})
        = \diag(\hat{\mK}_1,\hat{\mK}_2,\ldots,\hat{\mK}_b)
        \qquad
        \text{with }\hat{\mK}_i \defeq \begin{bmatrix}
            \vec{1} & \mC_i\vec{1} & \cdots & \mC_i^{t-1}\vec{1}
        \end{bmatrix}.
    \end{equation*}
    Suppose $\hat{\mK}_t\vc = \vec{0}$, for $\bv{c} = (c_0, c_1, \ldots, c_{t-1})$.
    This means that $c_0 x + c_1 \lambda + \cdots + c_{t-1} \lambda^{t-1} = 0$ for each of the  $t$ eigenvalues $\lambda$ of $\mC_i$.
    But a nonzero polynomial of degree $<t$ cannot have $t$ roots, so $\vc = \vec{0}$.
    This means $\operatorname{rank}(\hat{\mK}_i) 
    = t$.
    Owing to the structure of $\mK_t(\hat{\mH})$, 
    \[
    \operatorname{rank}(\mK_t(\hat{\mH})) =
        \operatorname{rank}(\hat{\mK}_1) + \cdots + \operatorname{rank}(\hat{\mK}_b)
        = tb
        = k.
    \]
    This completes the proof.
\end{proof}


\section{Additional numerical experiments} \label{sec:more-figs}

\begin{figure}[t]
    \centering
    \includegraphics[scale=0.8]{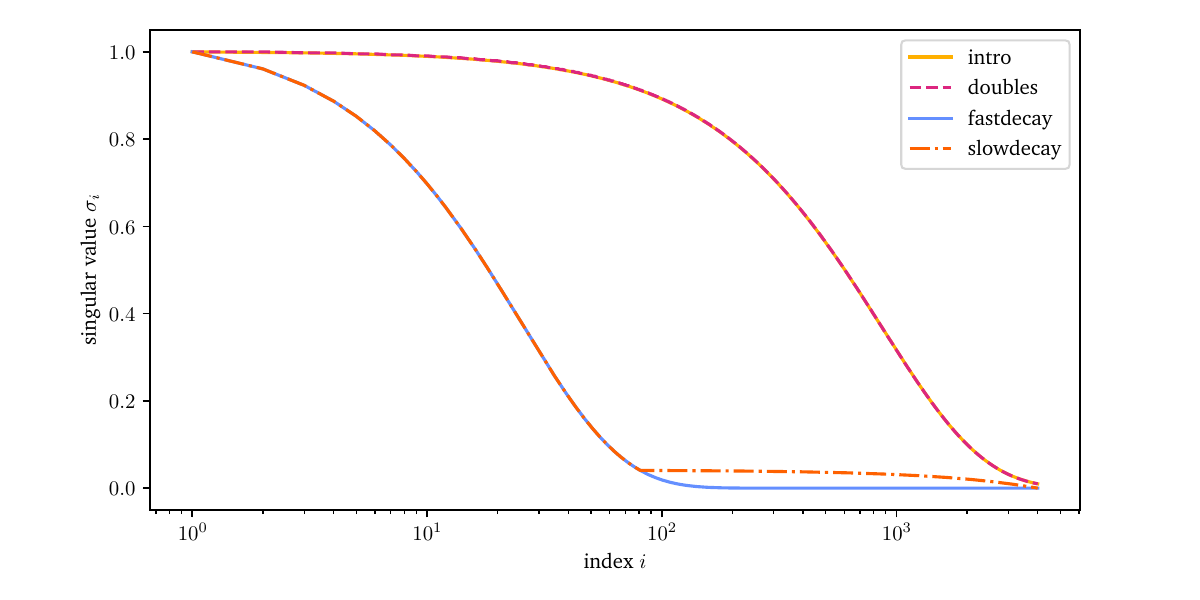}
    \caption{Spectra for the \ref{eq:intro}, \ref{eq:doubles}, \ref{eq:fastdecay}, and \ref{eq:slowdecay} problems.}
    \label{fig:spectra}
\end{figure}

In this section, we provide additional numerical experiments and further details on the experiment displayed in \cref{fig:intro}.
We test square matrices $\mat{A}$ of dimension $n = 4000$ with four singular value profiles, defined for $i = 1,\ldots,n$:
\begin{align*}
\sigma_i &\coloneqq 0.01^{\tfrac{i-1}{\,n-1}};\tag{\texttt{intro}} \label{eq:intro} \\
\sigma_i &\coloneqq0.01^{\tfrac{\lceil i/2 \rceil - 1}{\,n/2 - 1}}; \tag{\texttt{doubles}} \label{eq:doubles} \\
\sigma_i &\coloneqq \exp(-i/25); \tag{\texttt{fastdecay}} \label{eq:fastdecay} \\
\sigma_i &\coloneqq \max\!\left\{ \exp(-i/25), \tfrac{1 - i/n}{25} \right\}.\tag{\texttt{slowdecay}} \label{eq:slowdecay}
\end{align*}
These spectra are visualized in \cref{fig:spectra}.
The \ref{eq:intro} example consists of geometrically spaced eigenvalues from $1/100$ to $1$. 
\Cref{fig:intro} uses this spectrum with target rank $k=200$.
The \ref{eq:doubles} example also consists of eigenvalues geometrically spaced from $1/100$ to $1$, but each eigenvalue has multiplicity 2. 
The \ref{eq:fastdecay} and \ref{eq:slowdecay} examples are from \cite{TW23}.

Because of the rotational invariance of the Gaussian starting block $\mat{G}$, the convergence behavior of the RBKI algorithm is invariant under rotation of the input matrix $\mat{A}$ and thus depends only on the singular values of $\mat{A}$.
To make timing results representative for general matrices, we do not use any sparse matrix libraries, treating and storing $\mat{A}$ as a dense \(n \times n\) matrix.
We implement \cref{alg:block_krylov_LRA} using an implementation of the block-Lanczos algorithm that performs full reorthogonalization at each iteration.
Our experimental results appear in \cref{fig:intro_ex,fig:doubles_ex,fig:fastdecay_ex,fig:slowdecay_ex}.
Our conclusions are much the same as \cref{fig:intro} in the introduction: Across all instances, the matrix-vector complexity of RBKI is smallest with block size $b = 1$, but the practical efficiency is optimized with the intermediate block size $b = 20$.

\begin{figure}[t]
    \centering
    \includegraphics[scale=0.8]{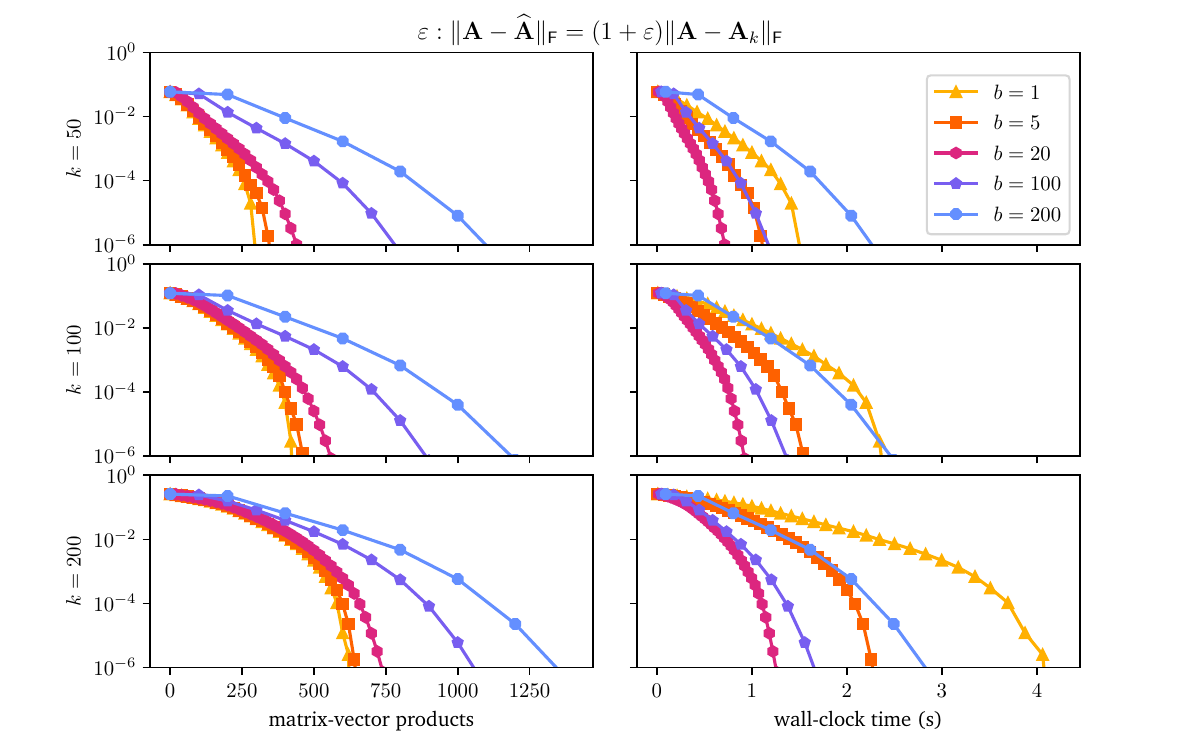}
    \caption{Convergence of \cref{alg:block_krylov_LRA} on \ref{eq:intro} problem for several values of $k$. }
    \label{fig:intro_ex}
\end{figure}

\begin{figure}[t]
    \centering
    \includegraphics[scale=0.8]{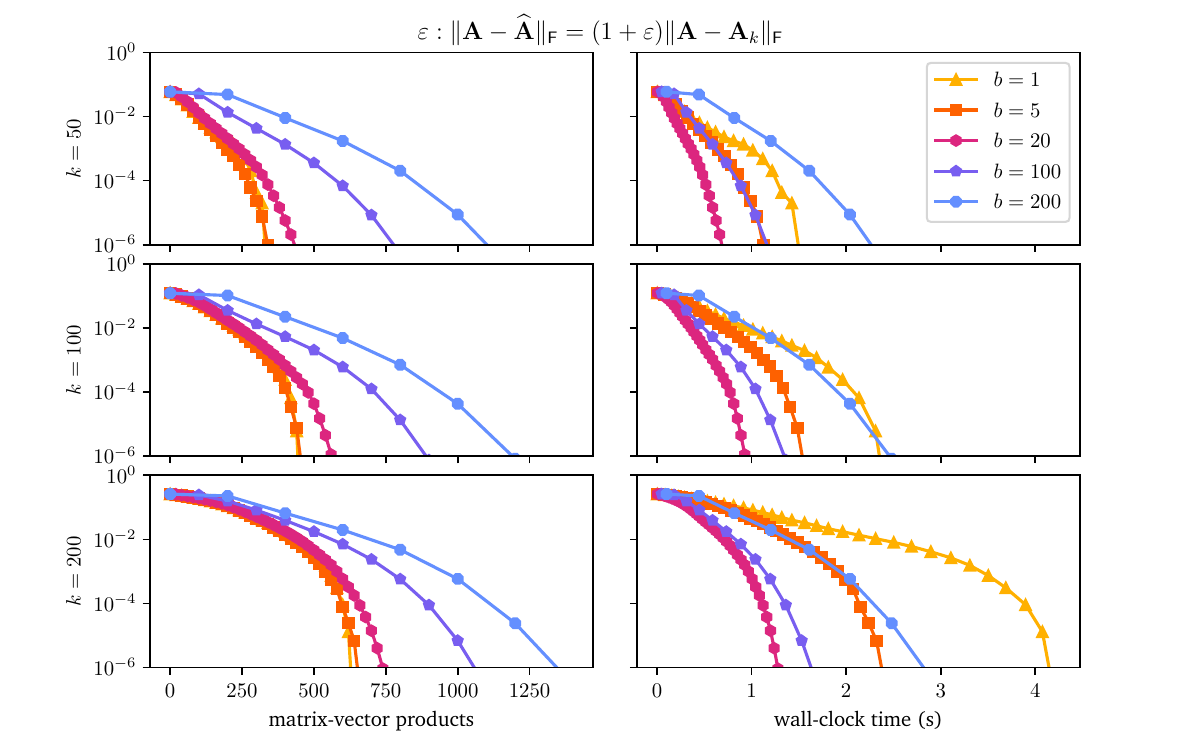}
    \caption{Convergence of \cref{alg:block_krylov_LRA} on \ref{eq:doubles} problem for several values of $k$.}
    \label{fig:doubles_ex}
\end{figure}

\begin{figure}[t]
    \centering
    \includegraphics[scale=0.8]{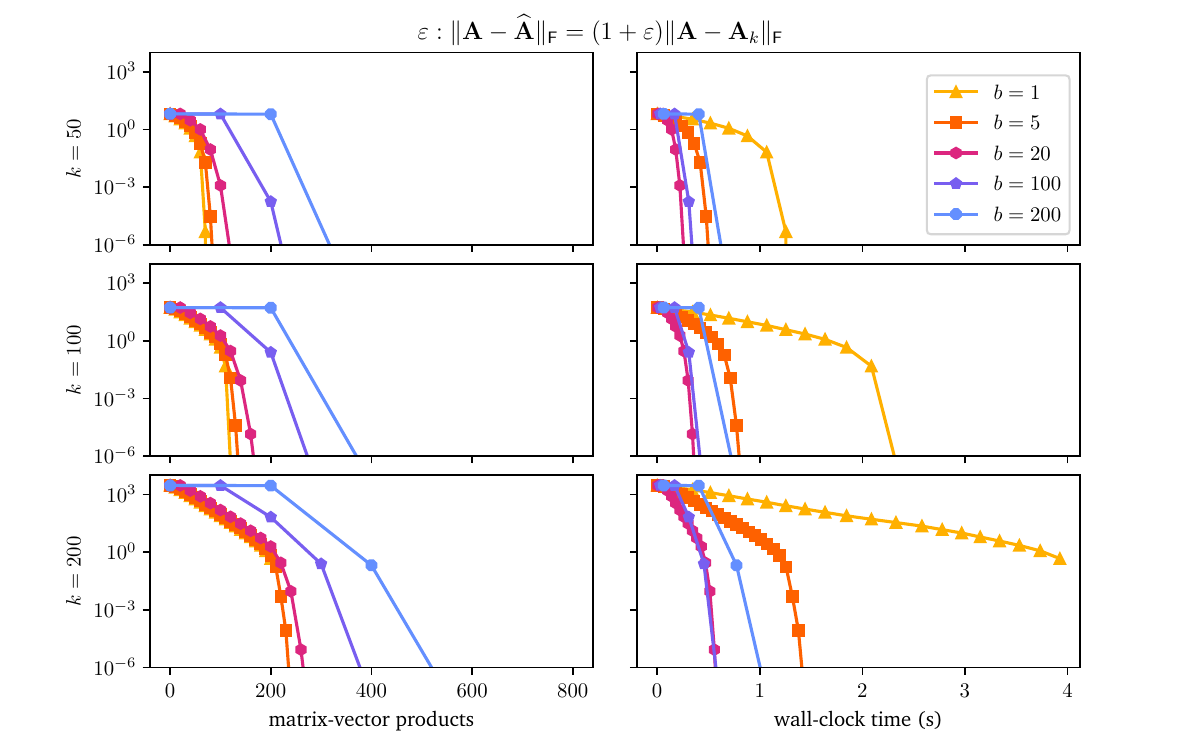}
    \caption{Convergence of \cref{alg:block_krylov_LRA} on \ref{eq:fastdecay} problem for several values of $k$.}
    \label{fig:fastdecay_ex}
\end{figure}

\begin{figure}[t]
    \centering
    \includegraphics[scale=0.8]{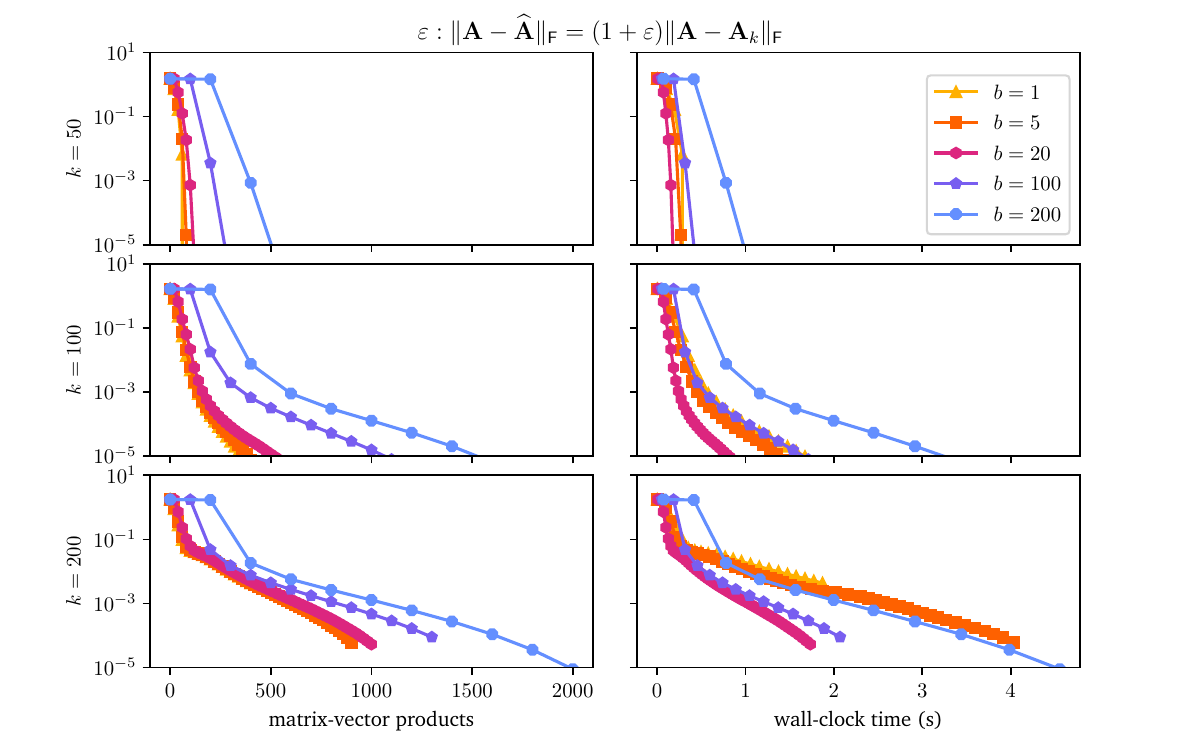}
    \caption{Convergence of \cref{alg:block_krylov_LRA} on \ref{eq:slowdecay} problem for several values of $k$.}
    \label{fig:slowdecay_ex}
\end{figure}

\end{document}